	\def\DocDate{\today}
	\def\Authors{Bruno Grenet%
            \thanks{Electronic mail address: bruno.grenet@ens-lyon.fr.}\\[2pt] 
            Laboratoire de l'Informatique du Parall\'elisme,\\
            \'Ecole Normale sup\'erieure de Lyon,\\
            46, all\'ee d'Italie,\\ 
            69\,364 Lyon Cedex 07, France%
        }
	\def\LongTitle{Acceptable Complexity Measures of Theorems}
        \numberwithin{equation}{section}
        \author{\Authors}
        \date{\DocDate}
        \title{\LongTitle}
        \def\@biblabel#1{#1.}
        \newenvironment{enumeration}
        {\begin{enumerate}[\itshape (i) \upshape]}
        {\end{enumerate}}
	\theoremstyle{plain}
	\newtheorem{theorem}{Theorem}[section]
	\newtheorem{lemma}[theorem]{Lemma}
	\newtheorem{proposition}[theorem]{Proposition}
        \newtheorem{corollary}[theorem]{Corollary}
	\theoremstyle{definition}
	\newtheorem{definition}[theorem]{Definition}  
        \newtheorem{example}[theorem]{Example}
	\theoremstyle{remark}
        \numberwithin{equation}{section}
\newcommand{\N}{\mathbb{N}}
\newcommand{\Q}{\mathbb{Q}}
\newcommand\ensemble[2]{\left\{#1:\vphantom{#1}#2\right\}}
\newcommand{\abs}[1]{\left|#1\right|}
\renewcommand{\phi}{\varphi}
\newcommand\prog[1]{\ensuremath{\textit{PROG}_{U_{#1}} }}
\newcommand\ceil[1]{\left\lceil #1 \right\rceil}
\newcommand\floor[1]{\left\lfloor #1 \right\rfloor}
\newcommand\ceillog[2]{\ensuremath{\ceil{\log_{#1} #2}} }
\newcommand\T{\mathcal{T}\xspace{}}
\newcommand\F{\mathcal{F}\xspace{}}
\renewcommand\O{\mathcal{O}}
\newcommand\card{\textrm{card}}
\newcommand\cdt[1]{(\ref{cdt-#1})}
\newcommand\textand{\text{ and }}
\newcommand\textif{\text{ if }}
\newcommand\textelse{\text{else}}
\renewcommand\epsilon\varepsilon
\renewcommand\emptyset\varnothing
\begin{document}
\selectlanguage{english}

\maketitle
\abstract{
In 1930, G\"odel \cite{godel1931} presented in K\"onigsberg his famous Incompleteness Theorem, stating that some true mathematical statements are 
unprovable. Yet,
this result gives us no idea about those \emph{independent} (that is, true and unprovable) statements, about their frequency, the reason they are unprovable,
and so on. Calude and J\"urgensen \cite{csi05} proved in 2005 Chaitin's ``heuristic principle'' for an appropriate measure: \emph{the theorems of a finitely-specified
theory cannot be significantly more complex than the theory itself} (see \cite{chaitin1974}). In this work, we investigate the existence of other measures, different from the 
original one, which satisfy this ``heuristic principle''. At this end, we introduce the definition of \emph{acceptable complexity measure of theorems}.
}

\sloppy
\section{Introduction}\label{sec-intro} 

In 1931, G\"odel \cite{godel1931} presented in K\"onigsberg his famous \emph{(first) Incompleteness Theorem}, stating that some true mathematical statements 
are unprovable. More formally and in modern terms, it states the following: 
\begin{quote}
Every computably enumerable, consistent axiomatic system containing elementary arithmetic is incomplete, that is, there exist true sentences
unprovable by the system.
\end{quote}

The truth is here defined by the standard model of the theory we consider. Yet,
this result gives us no idea about those \emph{independent} (that is, true and unprovable) statements, about their frequency, the reason they are unprovable,
and so on. Those questions of quantitative results about the independent statements have been investigated by Chaitin \cite{chaitin1974} in a first time,
and then by Calude, J\"urgensen and Zimand \cite{calude1994} and Calude and J\"urgensen \cite{csi05}. A state of the art is given in \cite{ipp08}. 
Those results state that in both topological and probabilistic terms, incompleteness is a widespread phenomenon. Indeed, unprovability appears as the
norm for true statements while provability appears to be rare. This interesting result brings two more questions. Which true statements are provable, and
why are they provable when other ones are unprovable?

Chaitin \cite{chaitin1974} proposed an ``heuristic principle'' to answer the second question: \emph{the theorems of a 
finitely-specified theory cannot be significantly more complex than the theory itself}. It was proven \cite{csi05} that Chaitin's ``heuristic principle''
is valid for an appropriate measure. This measure is based on the program-size complexity: The complexity $H(s)$ of a binary string $s$ is the length
of the shortest program for a self-delimiting Turing machine (to be defined in the next section) to calculate $s$ (see 
\cite{kolmo1968,chaitin1975,irap02,li1997}). We consider the following computable variation of the program-size complexity:
\[\delta(x)=H(x)-\abs x.\]

This measure gives us some indications about the reasons of unprovability of certain statements. It would be very
interesting to have other results in order to understand the Incompleteness Theorem. Among them, one can try to prove a kind of reverse of the 
theorem Calude and J\"urgensen proved. Their theorem states that there exists a constant $N$ such that any theory which satisfies the hypothesis
of G\"odel's Theorem cannot prove any statements $x$ with $\delta(x)>N$. Another question of interest could be the following: Does there exist
any independent statements with a low $\delta$-complexity? 

Those results are only examples of what can be investigated in this domain.
Yet, such results seem to be hard to prove with the $\delta$-complexity. The aim of our work is to find other complexities which satisfy this 
``heuristic principle'' in order to be able to prove the remaining results. At this end, we introduce the notion of \emph{acceptable complexity
measure of theorems} which captures the important properties of $\delta$. After studying the results of \cite{csi05} about $\delta$, we 
define the acceptable complexity measures. We study their properties, and try to find some other acceptable complexity measures, different from
$\delta$.

The paper is organized as follows. We begin in Section \ref{sec-prerequisites} by some notations and useful definitions. 
In Section \ref{sec-delta}, we present the results of \cite{csi05} with some corrections. Section \ref{sec-acm}
is devoted to the definition of the \emph{acceptable complexity measure of theorems}, and some counter-examples will be given in Section \ref{sec-indep}.
This section is also devoted to the proof of the independence of the conditions we impose on a complexity to be acceptable.
In Section \ref{sec-form}, we will be interested in the possible forms of those acceptable complexity measures.

\section{Prerequisites and notations}\label{sec-prerequisites} 

In the sequel, $\N$ and $\Q$ respectively denote the sets of natural integers and rational numbers. For an integer $i\ge2$, $\log_i$ is the base
$i$ logarithm. We use the notations $\floor\alpha$ and $\ceil\alpha$ respectively for the floor and the ceiling of a real $\alpha$. The cardinality
of a set $S$ is denoted by $\card(S)$. For every integer $i\ge2$, we fix an alphabet $X_i$ with $i$ elements, $X_i^*$ being the set of finite strings
on $X_i$, including the empty string $\lambda$, and $\abs w_i$ the length of the string $w\in X_i$.

We assume the reader is familiar with Turing machines processing strings \cite{turing1936} and with the basic notions of computability theory (see,
for example \cite{sipser,papadimitriou,odifreddi}). We recall that a set is said computably enumerable (abbreviated c.e.) if it is
the domain of a Turing machine, or equivalently if it can be algorithmically listed.

The complexity measures we study are \emph{computable variation} of the \emph{program-size complexity}. In order to define it, we define the
\emph{self-delimiting Turing machines}, shortly \emph{machines}, which are Turing machines the domain of which is a prefix-free set. 
A set $S\subset X^*_i$ is said \emph{prefix-free} if no string of $S$ is a proper extension of another one. In other words, if $x,y\in S$ and
if there exists $z$ such that $y=xz$, then $z=\lambda$. We denote by 
$\textit{PROG}_T=\ensemble{x\in X^*_i}{T\text{ halts on }x}$ the program set of the Turing machine $T$. 
We recall two important results on prefix-free sets. If $S\subset X_i^*$ is
a prefix-free set, then Kraft's Inequality holds: $\sum_{k=1}^\infty r_k\cdot i^{-k}\le1$, where $r_k=\ensemble{x\in S}{\abs x_i=k}$. The second result
is called the Kraft-Chaitin Theorem and states the following: Let $(n_k)_{k\in\N}$ be a computable sequence of non-negative integers such that
\[\sum_{k=1}^\infty i^{-n_k}\le1,\]
then we can effectively construct a prefix-free sequence of strings $(w_k)_{k\in\N}$ such that for each $k\ge1$, $\abs{w_k}_i=n_k$.

The \emph{program-size complexity} of a string $x\in X^*_Q$, relative to the machine $T$, is defined by
\[H_{i,T}=\min\ensemble{\abs y_i}{y\in X^*_i\textand T(y)=x}.\]
In this definition, we assume that $\min(\emptyset)=\infty$. The Invariance Theorem ensures the effective existence of a so-called \emph{universal}
machine $U_i$ which minimize the program-size complexity of the strings. For every $T$, there exists a constant $c>0$ such that for all 
$x\in X^*_i$, $H_{i,U_i}(x)\le H_{i,T}(x)+c$. In the
sequel, we will fix $U_i$ and denote by $H_i$ the complexity $H_{i,U_i}$ relative to $U_i$.

A \emph{G\"odel numbering} for a formal language $L\subseteq X_i^*$ is a computable, one-to-one function $g:L\to X_2^*$. By $G_i$, or $G$ if there is no 
possible confusion, we denote the set of all the G\"odel numbering for a fixed language. In what follows, we consider theories which satisfy the
hypothesis of G\"odel Incompleteness Theorem, that is finitely-specified, sound and consistent theories strong enough to formalize arithmetic. The first
condition means that the set of axioms of the theory is c.e.; soundness is the property that the theory only proves true sentences; consistency states that
the theory is free of contradictions.
We will generally denote by $\F$ such a theory, and by $\T$ the set of theorems that $\F$ proves.

\section{The function $\delta_g$}\label{sec-delta} 

We present in this section the function $\delta_g$ and some results about it. It was defined in \cite{csi05} and almost all the results come
from this paper. Hence, complete proofs of the results can be found in it. Yet, there was a mistake in the paper, and we need to
modify a bit the definition of $\delta_g$. We have to adapt the proofs with the new definition. The transformations are essentially cosmetic
in almost all the proofs so we give only sketches of them. For Theorem \ref{thm-inva}, there are a bit more than details to change,
so we provide a complete proof of this result. Furthermore, we formally prove an assertion used in the proof of Theorem \ref{thm-incompl}.

We first define, for every integer $i\ge 2$, the function $\delta_i$ by
\[\delta_i(x) = H_i(x)-\abs{x}_i.\]
Now, in order to ensure that the complexity we study is not dependent on the way we write the theorems, we define the $\delta$-complexity
\emph{induced by a G\"odel numbering} $g$ by\footnote{The definition in \cite{csi05} was $\delta_g(x)=H_2(g(x))-\ceillog2i\cdot\abs x_i$.}
\[\delta_g(x) = H_2(g(x))-\ceil{\log_2(i)\cdot\abs x_i},\]
where $g$ is a G\"odel numbering the domain of which is in $X^*_i$.

The first result comes in fact from \cite{irap02}, and the theorem we present here is one of its direct corollaries.

\begin{theorem}[\protect{\cite[Corollary 4.3]{csi05}}]\label{thm-inf} 
For every $t\ge 0$, the set $\ensemble{x\in X^*_i}{\delta_i(x)\le t}$ is infinite.
\end{theorem}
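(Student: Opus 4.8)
The plan is to show that there are infinitely many strings $x \in X_i^*$ whose program-size complexity $H_i(x)$ exceeds their length $|x|_i$ by at most a fixed constant, and then amplify this to obtain, for every threshold $t \geq 0$, infinitely many strings with $\delta_i(x) = H_i(x) - |x|_i \leq t$. The natural route is via the Kraft--Chaitin Theorem, which is the main tool available in the excerpt for constructing short programs, together with the standard counting argument that complexity cannot be much larger than length.

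**First I would** recall the elementary upper bound $H_i(x) \leq |x|_i + c$ for some constant $c$ independent of $x$. This follows from the Invariance Theorem: consider the machine $T$ that simply copies its input to its output (the identity machine, whose domain $X_i^*$ is, strictly speaking, not prefix-free, so instead one uses a machine reading a self-delimiting encoding of $x$, or appeals to Kraft--Chaitin directly). Concretely, apply the Kraft--Chaitin Theorem to a computable enumeration of $X_i^*$ with the requested lengths $n_k = |x_k|_i + \lceil \log_i(k+1) \rceil + 1$ or similar, checking $\sum_k i^{-n_k} \leq 1$; this yields a machine computing each $x$ from a program of length $|x|_i + O(\log|x|_i)$. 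A cleaner argument: there are exactly $i^n$ strings of length $n$, and the number of programs of length $< n - t$ is at most $\sum_{j < n-t} i^j < i^{n-t}$, so for each $n$ at least $i^n - i^{n-t}$ strings of length $n$ satisfy $H_i(x) \geq n - t$... but we want the reverse inequality, so the counting must be set up to bound $H_i$ from above, which is exactly what the Kraft--Chaitin construction gives.

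**Then** with the bound $H_i(x) \leq |x|_i + c$ in hand (valid for all but we must be careful it holds for \emph{all} $x$ with a uniform $c$, which Kraft--Chaitin delivers), the claim follows by a padding trick: fix any target $t \geq 0$. It suffices to exhibit infinitely many $x$ with $\delta_i(x) \leq t$. Take a long string $w$ of low complexity — for instance $w = a^n$ (the letter $a \in X_i$ repeated $n$ times), which has $H_i(a^n) \leq \lceil \log_i n \rceil + d$ for a constant $d$, since a program need only encode $n$. Then $\delta_i(a^n) = H_i(a^n) - n \leq \lceil \log_i n \rceil + d - n$, which is $\leq t$ for all sufficiently large $n$; since there are infinitely many such $n$, we get infinitely many strings in the set, proving it is infinite.

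**The hard part will be** making the complexity upper bound on the padding strings fully rigorous within the self-delimiting framework — one cannot just "encode $n$" naively because the program must be self-delimiting, so encoding $n$ costs $\log_i n + 2\log_i\log_i n + O(1)$ bits (a prefix-free code for integers), not merely $\log_i n$. This is harmless here since that quantity is still $o(n)$ and hence eventually below $n - t$, but it must be stated correctly; the clean way is to invoke the Kraft--Chaitin Theorem with the sequence $n_k = \lceil \log_i(k+1) \rceil + 2\lceil \log_i(\lceil \log_i(k+1)\rceil + 1) \rceil + c$, verify Kraft's inequality, and read off that the $k$-th string $a^k$ has $H_i(a^k) \leq n_k$. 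Alternatively, since the excerpt states this theorem is a corollary of a result from \cite{irap02} and \cite{csi05}, one may simply cite that $\{x : H_i(x) \leq |x|_i + c\}$-type estimates are classical; but a self-contained two-line padding argument as above is cleaner and I would present that, being careful about the self-delimiting overhead.
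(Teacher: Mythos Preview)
Your padding argument with the constant strings $a^n$ is correct and gives a clean, self-contained proof: since $H_i(a^n)$ is bounded by the length of a prefix-free encoding of $n$ plus a constant, i.e.\ $O(\log_i n)$, one has $\delta_i(a^n)\to-\infty$, so the set $\{x:\delta_i(x)\le t\}$ contains all but finitely many of the strings $a^n$ and is therefore infinite for every $t\ge0$. (Your opening paragraphs about a uniform bound $H_i(x)\le|x|_i+c$ are a detour --- that bound fails for self-delimiting complexity on arbitrary strings, the correct general estimate being $|x|_i+O(\log_i|x|_i)$ --- but you never actually rely on it, so no harm is done.)

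The paper's proof takes a genuinely different route: it cites the fact that $\{x:\delta_i(x)>-t\}$ is immune (Theorem~5.31 in \cite{irap02}), observes that $\{x:\delta_i(x)>t\}$ is then immune as an infinite subset of an immune set, and concludes that its complement --- the set in the statement --- cannot be cofinite (otherwise the immune set would itself be c.e.), hence is infinite. This approach is less elementary, resting on an external immunity result, but it delivers more than the statement asks: the set $\{x:\delta_i(x)\le t\}$ is shown to be non-computable, not merely infinite. Your argument trades that extra strength for self-containment and simplicity, which for the theorem as stated is a perfectly reasonable exchange.
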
 

\begin{proof} 
Following \cite[Theorem 5.31]{irap02}, for every $t\ge 0$, the set $C_{i,t}=\ensemble{x\in X^*_i}{\delta_i(x)>-t}$ is immune\footnote{A set
is said immune when it is infinite and contains no infinite c.e. subset.}. Hence, as $\text{Complex}_{i,t}=\ensemble{x\in X^*_i}{\delta_i(x)>t}$
is an infinite subset of an immune set, it is immune itself. The set in the statement being the complement of the immune set $\text{Complex}_{i,t}$,
it is not computable, and in particular infinite.
\end{proof} 

The next theorem states that the definitions \emph{via} a G\"odel numbering or without this device are not far from each other. It allows us to work
with the function $\delta_i$ instead of $\delta_g$ and thus to simplify the proofs thanks to the elimination of some technical details. 
Nevertheless, those details are present in the following proof.

\begin{theorem}[\protect{\cite[Theorem 4.4]{csi05}}]\label{thm-inva} 
Let $A\subseteq X_i^*$ be c.e. and $g:A\to B^*$ be  a G\"odel numbering. Then, there effectively exists a constant $c$ (depending upon 
$U_i, U_2$, and $g$) such that for all $u\in A$ we have
\begin{equation}
\label{eq-inva}
\abs{H_2(g(u))-\log_2(i)\cdot H_i(u)}\le c.
\end{equation}
\end{theorem}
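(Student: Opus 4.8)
The plan is to prove the two inequalities implicit in \eqref{eq-inva} separately, each by exhibiting a machine and invoking the Invariance Theorem. For the bound $H_2(g(u)) \le \log_2(i)\cdot H_i(u) + O(1)$, I would build a self-delimiting machine $T_2$ over $X_2^*$ that simulates $U_i$ as follows: on input a binary string, it first \emph{decodes} it into a string over $X_i$ — here one has to be slightly careful, because a naive digit-by-digit base change is not prefix-compatible, so instead one fixes a computable prefix-free encoding $X_i^* \to X_2^*$ whose code-length for a string of $X_i$-length $n$ is $\lceil \log_2(i)\cdot n\rceil + O(1)$ (this is where Kraft--Chaitin enters: assign to each $w\in X_i^*$ a binary codeword of that length, which is summable since $\sum_n i^n \cdot 2^{-\lceil \log_2(i)\cdot n\rceil - c}\le 1$ for suitable $c$). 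Then $T_2$ runs $U_i$ on the decoded string to get some $x\in A$, and outputs $g(x)$. If $p$ is a shortest $U_i$-program for $u$, i.e. $|p|_i = H_i(u)$, then the binary codeword for $p$ has length $\lceil \log_2(i)\cdot H_i(u)\rceil + O(1)$ and $T_2$ maps it to $g(u)$; the Invariance Theorem for $U_2$ then gives $H_2(g(u)) \le \log_2(i)\cdot H_i(u) + O(1)$.

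For the reverse bound $H_i(u) \le (\log_i 2)\cdot H_2(g(u)) + O(1)$, equivalently $\log_2(i)\cdot H_i(u) \le H_2(g(u)) + O(1)$, I would go the other direction: since $g$ is a computable one-to-one map with c.e. domain $A$, its partial inverse $g^{-1}$ is computable on $g(A)$ (dovetail the computation of $g$ over all strings of $A$ until the target is hit). Build a machine $T_i$ over $X_i^*$ that, on an input over $X_i$, first encodes it into binary via the same prefix-free dictionary used above, runs $U_2$ on the result to obtain some binary string $v$, then computes $g^{-1}(v)$ and outputs it. Running $T_i$ on the $X_i$-encoding of a shortest $U_2$-program for $g(u)$ (whose $X_i$-length is about $(\log_i 2)\cdot H_2(g(u)) + O(1)$, again by the Kraft--Chaitin dictionary, now used in the other direction) yields $u$, so $H_i(u) \le (\log_i 2)\cdot H_2(g(u)) + O(1)$.

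Combining the two bounds and clearing the $\log_2(i)$ factor (absorbing it into the constant, since it is a fixed number depending only on $i$) gives $|H_2(g(u)) - \log_2(i)\cdot H_i(u)| \le c$ for a constant $c$ depending only on $U_i$, $U_2$ and $g$, as claimed. The constant is effective because the Invariance Theorem is effective, the Kraft--Chaitin construction is effective, and $g$ (hence the simulation overhead of $T_2$ and $T_i$) is given explicitly.

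The main obstacle is getting the prefix-freeness right in both simulating machines: one cannot simply concatenate a base-$i$-to-base-$2$ transcription of the input with the rest, because the result need not be prefix-free and the length bookkeeping for the ceiling term must come out to exactly $\lceil \log_2(i)\cdot |w|_i\rceil + O(1)$. Using a single fixed Kraft--Chaitin dictionary for the bijection between $X_i^*$-strings and binary codewords of the prescribed lengths handles both the prefix-freeness of the domains of $T_2, T_i$ and the precise logarithmic length accounting; everything else is routine simulation plus two applications of the Invariance Theorem.
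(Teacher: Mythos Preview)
Your overall strategy matches the paper's: prove the two inequalities separately, in each case constructing a self-delimiting machine via Kraft--Chaitin and then applying the Invariance Theorem. The skeleton is right, but there is a concrete gap in the Kraft step.

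You claim that one can fix a prefix-free encoding of \emph{all} of $X_i^*$ into $X_2^*$ with codeword length $\lceil\log_2(i)\cdot n\rceil + c$ for strings of $X_i$-length $n$, because ``$\sum_n i^n\cdot 2^{-\lceil\log_2(i)\cdot n\rceil - c}\le 1$ for suitable $c$''. This is false: since $\lceil\log_2(i)\cdot n\rceil < \log_2(i)\cdot n + 1$, each term satisfies $i^n\cdot 2^{-\lceil\log_2(i)\cdot n\rceil - c} > i^n\cdot i^{-n}\cdot 2^{-c-1} = 2^{-c-1}$, so the series diverges for every $c$. No prefix-free dictionary for all of $X_i^*$ with those lengths exists, and the same obstruction hits your reverse-direction dictionary $X_2^*\to X_i^*$.

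The paper avoids this by \emph{not} encoding all of $X_i^*$. It applies Kraft--Chaitin only to $w\in\textit{PROG}_{U_i}$, setting $n_w=\lceil\log_2(i)\cdot|w|_i\rceil$; because $\textit{PROG}_{U_i}$ is itself prefix-free, Kraft's inequality gives $\sum_{w\in\textit{PROG}_{U_i}} i^{-|w|_i}\le 1$, and hence $\sum_{w\in\textit{PROG}_{U_i}} 2^{-n_w}\le\sum_{w\in\textit{PROG}_{U_i}} i^{-|w|_i}\le 1$. One then builds the machine $M$ directly on the resulting prefix-free set $\{s_w\}$ by $M(s_w)=g(U_i(w))$. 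The reverse direction is symmetric, using $\textit{PROG}_{U_2}$ and lengths $m_w=\lceil\log_i(2)\cdot|w|_2\rceil$. Once you restrict the Kraft--Chaitin construction to the program sets rather than the full free monoids, the rest of your argument goes through exactly as you wrote it.
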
 

\begin{proof}
We will in fact prove the existence of two constants $c_1$ and $c_2$ such that on one hand
\begin{equation}\label{eq-c1}
H_2(g(u))\le \log_2(i)\cdot H_i(u)+c_1
\end{equation}
and on the other hand
\begin{equation}\label{eq-c2}
\log_2(i)\cdot H_i(u)\le H_2(g(u))+c_2.
\end{equation}

For each string $w\in\prog i$, we define $n_w=\ceil{\log_2(i)\cdot\abs w_i}$. This integers verify the following:
\[\sum_{w\in\prog i} 2^{-n_w} = \sum_{w\in\prog i} 2^{-\ceil{\log_2(i)\cdot\abs w_i}} \le \sum_{w\in\prog i} i^{-\abs w_i} \le 1,\]
because $\prog i$ is prefix-free. This inequality shows that the sequence $(n_w)$ satisfies the conditions of
the Kraft-Chaitin Theorem. Consequently, we can construct, for every $w\in\prog i$, a binary string $s_w$ of length $n_w$ and such that the
set $\ensemble{s_w}{w\in\prog i}$ is c.e. and prefix-free. Accordingly, we can construct a machine $M$ whose domain is this set, and such that for 
every $w\in\prog i$,
\[M(s_w)=g(U_i(w)).\]
If we denote, for a string $x\in X^*_i$, $x^*$ the lexicographically first string of length $H_i(x)$ such that $U_i(x^*)=x$, we now 
have $M(s_{w^*})=g(U_i(w^*))=g(w)$, and hence
\begin{eqnarray*}
H_M(g(w))\le\abs{s_{w^*}}_2&=&\ceil{\log_2(i)\cdot\abs{w^*}_i}\\
&=&\ceil{\log_2(i)\cdot H_i(w)}\le \log_2(i)\cdot H_i(w)+1.
\end{eqnarray*}
By the Invariance Theorem, we get the constant $c_1$ such that \eqref{eq-c1} holds true.

We now prove the existence of $c_2$ such that \eqref{eq-c2} holds true. The proof is quite similar. For each string $w\in\prog 2$, we define
$m_w=\ceil{\log_i(2)\cdot\abs w_2}$. As for the $n_w$, the integers $m_w$ satisfy
\[\sum_{w\in\prog2} i^{-m_w} \le \sum_{w\in\prog2} 2^{-\abs w_2} \le 1.\]
We can also apply the Kraft-Chaitin Theorem to effectively construct, for every $w\in\prog2$, a string $t_w\in X^*_i$
of length $m_w$ and such that the set $\ensemble{t_w}{w\in\prog2}$ is c.e. and prefix-free. As $g$ is a G\"odel numbering
and hence one-to-one, we can construct a machine $D$ whose domain is the previous set and such that $D(t_w)=u\textif U_2(w)=g(u)$.
Now, if $U_2(w)=g(u)$, then 
\begin{eqnarray*}
H_D(u)\le\ceil{\log_i(2)\cdot\abs w_2}&\le&\log_i(2)\cdot\abs w_2+1\\
&\le&\log_i(2)\cdot H_2(g(u))+d.
\end{eqnarray*}
So we apply the Invariance Theorem to get a constant $d'$ such that $\log_2(i)\cdot H_i(u)\le\log_2(i)\cdot H_D(u)+d'$, hence
\[\log_2(i)\cdot H_i(u)\le H_2(g(u))+d+d'.\]
The constant $c_2=d+d'$ satisfies \eqref{eq-c2}.
\end{proof}

In \cite{csi05}, the equation \eqref{eq-inva} was $\abs{\delta_g(u)-\ceillog2i\cdot\delta_i(u)}\le d$. Theorem \ref{thm-inva} gives a similar result
for $\delta$, hence $\abs{\delta_g(u)-\log_2(i)\cdot\delta_i(u)}\le c+1$, where $c$ is the constant of the theorem. In the proof, we supposed that
$A=X^*_i$ but it is still valid with a proper subset of $X^*_i$.

The next corollary will be important for the generalization of $\delta_g$ we will do in the next section. It is the same kind of result as above,
but applied to two G\"odel numberings.

\begin{corollary}[\protect{\cite[Corollary 4.5]{csi05}}]\label{cor-inva} 
Let $A\subseteq X_i^*$ be c.e. and $g, g' : A \to B^*$ be two  G\"odel numberings. Then, there effectively exists a constant $c$ (depending upon
$U_2, g$ and $ g'$) such that for all $u\in A$ we have:
\begin{equation}
\label{eq-invg}
\abs{H_2(g(u))-H_2(g'(u))}\le c.
\end{equation}
\end{corollary}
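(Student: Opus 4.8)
The plan is to reduce this statement to Theorem \ref{thm-inva} by applying it twice, once to each G\"odel numbering, and then combining the two estimates through the triangle inequality. First I would invoke Theorem \ref{thm-inva} with the numbering $g$: since $A\subseteq X_i^*$ is c.e. and $g:A\to B^*$ is a G\"odel numbering, there effectively exists a constant $c_g$ (depending on $U_i$, $U_2$ and $g$) such that for all $u\in A$,
\begin{equation*}
\abs{H_2(g(u))-\log_2(i)\cdot H_i(u)}\le c_g.
\end{equation*}
Applying the same theorem to $g':A\to B^*$ yields a constant $c_{g'}$ (depending on $U_i$, $U_2$ and $g'$) with $\abs{H_2(g'(u))-\log_2(i)\cdot H_i(u)}\le c_{g'}$ for all $u\in A$.

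The key observation is that the term $\log_2(i)\cdot H_i(u)$ is \emph{common} to both inequalities, so it cancels when we subtract. Concretely, for every $u\in A$ I would write
\begin{equation*}
\abs{H_2(g(u))-H_2(g'(u))}\le\abs{H_2(g(u))-\log_2(i)\cdot H_i(u)}+\abs{\log_2(i)\cdot H_i(u)-H_2(g'(u))}\le c_g+c_{g'},
\end{equation*}
so the constant $c=c_g+c_{g'}$ works. Since both $c_g$ and $c_{g'}$ are obtained effectively from Theorem \ref{thm-inva}, so is $c$; and although each depends a priori on $U_i$ as well, the dependence on $U_i$ is fixed once and for all (we fixed $U_i$ in Section \ref{sec-prerequisites}), so the only ``free'' dependence is on $U_2$, $g$ and $g'$, as stated. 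One should also note, as remarked after the proof of Theorem \ref{thm-inva}, that the theorem remains valid when $A$ is a proper c.e. subset of $X_i^*$ rather than all of $X_i^*$, which is exactly the generality we need here.

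I do not expect any real obstacle: this is a routine ``apply the previous result twice and use the triangle inequality'' argument, and the only mild point of care is bookkeeping the dependencies of the constants (making sure the intermediate quantity $\log_2(i)\cdot H_i(u)$ is genuinely the same object in both applications, which it is, since $H_i$ refers to the fixed universal machine $U_i$). The effectivity of $c$ is immediate from the effectivity asserted in Theorem \ref{thm-inva}.
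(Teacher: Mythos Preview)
Your proposal is correct and matches the paper's intended derivation: the paper states this result as an immediate corollary of Theorem~\ref{thm-inva} without giving an explicit proof, and your ``apply Theorem~\ref{thm-inva} to $g$ and to $g'$, then use the triangle inequality to cancel the common term $\log_2(i)\cdot H_i(u)$'' is precisely the argument it has in mind. Your remarks on the effectivity of $c=c_g+c_{g'}$ and on the validity of Theorem~\ref{thm-inva} for proper c.e.\ subsets $A\subseteq X_i^*$ are also in line with the paper's comments following that theorem.
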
 

In order to have a complete formal proof of Theorem \ref{thm-incompl}, we need to bound the complexity of the set $\T$ of theorems that a theory $\F$
proves. It is the aim of the following lemma.

\begin{lemma}\label{lem-bound-th} 
Let $\F$ be a finitely-specified, arithmetically sound (i.e. each arithmetical proven sentence is true), consistent theory strong enough to formalize
arithmetic, and denote by $\T$ its set of theorems written in the alphabet $X_i$. Then for every $x\in\T$,
\[\frac12\cdot\abs x_i+\O(1)\le H_i(x)\le\abs{x}_i+\O(1).\]
\end{lemma}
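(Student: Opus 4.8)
The plan is to prove the two inequalities separately: the upper bound by a Kraft--Chaitin coding argument that uses only that $\T$ is computably enumerable and that formulas carry syntactic redundancy, and the lower bound --- the delicate one --- by a self-referential argument in the spirit of Chaitin's incompleteness theorem.

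\emph{The upper bound.} Since $\F$ is finitely specified (and, replacing its axiom set by a decidable equivalent if necessary, has a decidable proof predicate), the set $\T$ is c.e.; fix an injective enumeration $x_0,x_1,x_2,\dots$ of $\T$. The key structural remark is that a first-order language for arithmetic written over $X_i$ is genuinely redundant --- matched punctuation, fixed arities of the function and relation symbols, and so on --- so the number of well-formed sentences of length $n$ is only $i^{(1-\varepsilon)n+\O(1)}$ for some fixed $\varepsilon>0$, a routine generating-function estimate for the context-free set of formulas. A fortiori the number $t_n$ of theorems of length $n$ satisfies the same bound, hence
\[\sigma:=\sum_{x\in\T}i^{-\abs x_i}=\sum_{n\ge 0}t_n\,i^{-n}<\infty.\]
Pick an integer $C$ with $i^{-C}\sigma\le 1$. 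Then $(\abs{x_k}_i+C)_{k\ge 0}$ is a computable sequence of non-negative integers with $\sum_k i^{-(\abs{x_k}_i+C)}\le 1$, so the Kraft--Chaitin Theorem yields a c.e.\ prefix-free family $(w_k)_{k\ge 0}$ with $\abs{w_k}_i=\abs{x_k}_i+C$, together with a machine $M$ such that $M(w_k)=x_k$ for all $k$. By the Invariance Theorem there is a constant $c_M$ with $H_i(x_k)\le\abs{x_k}_i+C+c_M$ for every $k$, i.e.\ $H_i(x)\le\abs x_i+\O(1)$ on $\T$.

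\emph{The lower bound.} Here I would argue by contradiction in Chaitin's style: assuming that for every constant $c$ some theorem $x$ has $H_i(x)<\frac12\abs x_i-c$, design a machine with prefix-free domain that, reading a short program $p$, enumerates $\T$ while running the time-bounded approximations to $H_i$ from above, and outputs the first theorem exhibiting a length-versus-complexity imbalance calibrated to $\abs p_i$; soundness of $\F$ together with the upper bound just established should then pin $H_i(x)$ both below $\abs p_i+\O(1)$ and above a quantity that has been forced to exceed it. I expect this to be the main obstacle, for a concrete reason: a priori, from a self-delimited integer $m$ one can already compute the first theorem of length $>2m$ appearing in the enumeration of $\T$, an object of complexity only $\O(\log m)$ but of length exceeding $2m$. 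Any valid proof of the lower bound must therefore be careful about precisely which sentences are counted in $\T$ and about keeping the guarding machine genuinely self-delimiting --- this is exactly the point at which the corresponding argument of \cite{csi05} had to be handled with care, and it is where I would concentrate the work.
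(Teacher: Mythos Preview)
Your upper bound is correct in spirit but more elaborate than necessary. The paper does not count well-formed formulas at all; it simply appends a fixed ill-formed \emph{end-marker} string $y$ (for instance $y=++$) to each formula $x$, observes that the resulting set $\{xy:x\text{ well-formed}\}$ is prefix-free, and defines a self-delimiting machine $C$ that on input $xy$ outputs $x$. This gives $H_C(x)\le\abs x_i+\abs y_i$ directly, hence $H_i(x)\le\abs x_i+\O(1)$ by the Invariance Theorem. Your Kraft--Chaitin route would also work, but it relies on the sub-exponential count $i^{(1-\varepsilon)n}$ of formulas, which is language-dependent and not actually needed.

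Your lower bound, on the other hand, is headed in the wrong direction, and your own parenthetical observation explains why: from a self-delimited integer $m$ one can indeed compute a theorem of length exceeding $2m$ whose complexity is only $\O(\log m)$, so the ``for every $x\in\T$'' reading of the lower bound is simply false, and no Chaitin-style search procedure will rescue it. The paper's argument is entirely different and non-self-referential: it \emph{constructs} specific theorems that are incompressible by design. Using the convention that a variable is a marker $v$ followed by a binary identifier, it considers
\[
\phi(m,n)\equiv\exists v_m\,\exists v_n\,(v_m=v_n)
\]
with $m,n$ chosen so that the pair $(m,n)$ is random, i.e.\ $H_i(m,n)\ge\abs m_i+\abs n_i+\O(1)$. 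Since $m$ and $n$ can be recovered from $\phi(m,n)$, one gets $H_i(\phi(m,n))\ge\abs m_i+\abs n_i+\O(1)$, while $\abs{\phi(m,n)}_i=2(\abs m_i+\abs n_i)+\O(1)$, yielding $H_i(\phi(m,n))\ge\tfrac12\abs{\phi(m,n)}_i+\O(1)$. Note that this establishes the lower bound only for these particular theorems, not for every $x\in\T$; the lemma's quantifier is arguably misstated, but the later applications (Theorem~\ref{thm-incompl} uses only the upper bound, and Proposition~\ref{prop-linear} needs only the existence of theorems witnessing the lower bound) require nothing more.
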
 

\begin{proof} 
For the upper bound, it is sufficient to give a way to describe those theorems using descriptions not greater than their lengths, and which
ensure that the computer we use is self-delimiting. 
We first note that a theorem in $\T$ is a special well-formed formula. The bound we give is valid for the set of all the well-formed formulae. We
consider the following program $C$: on its input $x$, $C$ tests if $x$ is a well-formed formula. It outputs it if the case arises, and enters
in an infinite loop else. 

This program has to be modified a bit as its domain is not prefix-free. The idea here is to add at the end of the input
a marker which appears only at the end of the words. In that way, if $x$ is prefix of $y$, then the end-marker has to appear in $y$. As it can only
appear at the end of $y$, then $x=y$. It ensures that the domain is prefix-free. We now have to define an end-marker. It is sufficient to take an
ill-formed formula. More precisely, we need a formula $y$ such that for every well-formed formula $x$, $xy$ is ill-formed, and for every 
$z\in X^*_i$, $xyz$ is also ill-formed. For instance, we can take $y=++$, where the symbol $+$ is interpreted as the addition of natural numbers.
There are in all formal systems plenty of possibilities for this $y$ (another choice could be $(+$ for instance, or any ill-formed formula with
parenthesis around). In the sequel, $y$ represents a fixed such ill-formula.

The new machine $C$ works as follows: on an input $z$, $C$ checks if $z=xy$ with a certain $x$. If the case arises, it checks if $x$ is
a well-formed formula, and then outputs $x$ if it does. In all the other cases, $C$ diverges. Now, we have a new machine $C$ whose domain 
is prefix-free, and such that $H_C(x)\le\abs x_i+\abs{y}_i$. By the Invariance Theorem, we get a constant $c$ such that $H_i(x)\le\abs x_i +c$.

We now prove the lower bound, that is that the complexity of a theorem has to be greater than a half of its length, up to a constant. 
The idea is the following: If
we consider a sentence $x$ of the set of theorems $\T$, then it may contain some variables which cannot be compressed. More precisely, as we can
work with many variables, it is not possible that for each of these variable, the word which is used to represent it has a small complexity.
To formalize the idea, we have
to define in a formal way what the variables in our formal language are. We consider that the variables are created as follows. A variable is denoted
by a special character, say $v$, indicating that it is a variable, and then a binary-written number identifying each variable. This number is called
the identifier of the variable. In the sequel, we denote by $v_n$ the variable the identifier of which is the integer $n$.

Now, we have to consider the formulae defined by
\[\phi(m,n)\equiv\exists v_m \exists v_n (v_m=v_n).\]
We suppose that $m$ and $n$ are random strings, that is $H_i(m)\ge\abs m_i+\O(1)$ and $H_i(n)\ge\abs n_i+\O(1)$. Furthermore, we suppose that
$H(m,n)\ge\abs m_i+\abs n_i+\O(1)$, in other words that $m$ and $n$ together are random. We can suppose that as such words do exist. Then 
\begin{eqnarray*}
H_i(\phi(m,n))&\ge& H_i(m)+H_i(n)+\O(1)\\
&\ge&\abs m_i+\abs n_i+\O(1)\\
&\ge&\frac12\cdot\abs{\phi(m,n)}_i+\O(1).
\end{eqnarray*}
Thus, we obtained the lower bound.

\end{proof} 

Improving the bounds in this lemma seems to be hard. A preliminary work should be to define exactly what we accept as a formal language.

The next theorem is the formal version of Chaitin's ``heuristic principle''. The very substance of the proof comes from previous results.

\begin{theorem}[\protect{\cite[Theorem 4.6]{csi05}}]\label{thm-incompl} 
Consider a finitely-specified, arithmetically sound (i.e. each arithmetical proven sentence is true), consistent theory strong enough to formalize
arithmetic, and denote by $\T$ its set of theorems written in the alphabet $X_i$.  Let $g$ be a G\"odel numbering for $\T$. Then, there exists 
a constant $N$,  which depends upon $U_i, U_2$ and $\T$, such that $\T$ contains no $x$ with $\delta_g(x)>N$. 
\end{theorem}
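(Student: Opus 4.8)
The plan is to obtain the bound on $\delta_g$ over $\T$ by combining two facts already at our disposal: the comparison of $\delta_g$ with $\delta_i$ coming from Theorem~\ref{thm-inva} (in the form recorded in the remark following it), and the upper bound $H_i(x)\le\abs x_i+\O(1)$ over $\T$ supplied by Lemma~\ref{lem-bound-th}. First I would note that $\T$ is c.e.: since $\F$ is finitely-specified its axioms form a c.e.\ set and proofs are finitely checkable, so the provable sentences can be algorithmically listed. Hence Theorem~\ref{thm-inva} applies with $A=\T$ and the given G\"odel numbering $g\colon\T\to X_2^*$, producing an effectively computable constant $c$ (depending on $U_i$, $U_2$ and $g$) with $\abs{H_2(g(x))-\log_2(i)\cdot H_i(x)}\le c$ for all $x\in\T$; as noted after that theorem this yields $\delta_g(x)\le\log_2(i)\cdot\delta_i(x)+c+1$ for every $x\in\T$.

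Next I would invoke Lemma~\ref{lem-bound-th}: a theorem of $\F$ is in particular a well-formed formula, so there is a constant $c'$ (depending on $U_i$ and on the fixed formal language of $\F$) such that $H_i(x)\le\abs x_i+c'$, i.e.\ $\delta_i(x)=H_i(x)-\abs x_i\le c'$ for all $x\in\T$. Substituting into the previous inequality gives
\[
\delta_g(x)\le\log_2(i)\cdot c'+c+1\qquad\text{for all }x\in\T,
\]
and since $\delta_g$ is integer-valued, $N:=\floor{\log_2(i)\cdot c'+c+1}$ works: no $x\in\T$ has $\delta_g(x)>N$. This $N$ depends only on $U_i$, $U_2$, $g$ and $\T$, and by Corollary~\ref{cor-inva} its dependence on $g$ reduces, up to an additive constant, to a dependence on $\T$ alone, matching the statement.

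The step I expect to be the real obstacle is not this short computation but the input it relies on, namely the upper bound $H_i(x)\le\abs x_i+\O(1)$ of Lemma~\ref{lem-bound-th}. For an arbitrary string one only has $H_i(x)\le\abs x_i+\O(\log\abs x_i)$ for a self-delimiting machine, so $\delta_i$ is genuinely unbounded on $X_i^*$; what saves us is that well-formed formulae can be parsed self-delimitingly at only $\O(1)$ extra cost by means of the end-marker device in the proof of Lemma~\ref{lem-bound-th} (relying on the existence of an ill-formed suffix $y$, such as $++$, with the property that $xy$ and every extension $xyz$ is ill-formed for all well-formed $x$, which is exactly what makes the associated machine's domain prefix-free). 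Thus the whole theorem rests on the correctness of that construction, which is why Lemma~\ref{lem-bound-th} had to be stated and proved in advance; granting it, the ``heuristic principle'' follows with no need for any Berry-paradox or proof-search argument, since theorems are simply self-describing up to an additive constant.
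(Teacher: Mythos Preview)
Your argument is correct and follows exactly the same route as the paper's proof: first bound $\delta_i$ on $\T$ via Lemma~\ref{lem-bound-th}, then transfer to $\delta_g$ via Theorem~\ref{thm-inva}. The additional remarks (that $\T$ is c.e., the explicit form of $N$, and the use of Corollary~\ref{cor-inva} to trade dependence on $g$ for dependence on $\T$) are all sound elaborations of what the paper leaves implicit.
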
 

\begin{proof} 
By Lemma \ref{lem-bound-th}, for every $x\in\T$, $\delta_i(x)\le c$. Using Theorem \ref{thm-inva}, there exists a constant $N$ such that for
every $x\in\T$, $\delta_g(x)\le N$.
\end{proof} 

The $\delta_g$ measure is also useful to prove a probabilistic result about independent statements. Indeed, we can prove that the probability of a
true statement of length $n$ to be provable tends to zero when $n$ tends to infinity.

\begin{proposition}[\protect{\cite[Proposition 5.1]{csi05}}]\label{prop-lim} 
Let $N>0$ be a fixed integer,  $\T\subset X^*_i$ be c.e. and $g: \T \to B^*$ be a  G\"odel numbering. Then,
\begin{equation}
\lim_{n\to\infty} i^{-n}\cdot\card\ensemble{x\in X^*_i}{\abs x_i=n, \delta_g(x) \le N}=0.
\end{equation}
\end{proposition}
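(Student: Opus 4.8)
The plan is \emph{not} to estimate the quantity $a_n:=\card\ensemble{x\in\T}{\abs x_i=n,\ \delta_g(x)\le N}$ term by term. Bounding $a_n$ by the number of binary strings of program-size complexity at most $\ceil{\log_2(i)\cdot n}+N$ only gives $a_n=O(i^n)$, i.e.\ a bounded density, which is too weak to conclude. Instead I would prove that the \emph{series} $\sum_{n\ge0}i^{-n}a_n$ converges, and then invoke the elementary fact that the general term of a convergent series of non‑negative reals tends to $0$. (Note also that, $\delta_g$ having domain $\T$, the set $\ensemble{x\in X_i^*}{\abs x_i=n,\ \delta_g(x)\le N}$ is exactly $\ensemble{x\in\T}{\abs x_i=n,\ \delta_g(x)\le N}$.)

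Set $S=\ensemble{x\in\T}{\delta_g(x)\le N}$, and take $x\in S$ with $n=\abs x_i$. By the definition of $\delta_g$ we have $H_2(g(x))\le\ceil{\log_2(i)\cdot n}+N\le n\log_2(i)+N+1$, hence, using $2^{n\log_2(i)}=i^n$,
\[2^{-H_2(g(x))}\ \ge\ i^{-n}\cdot 2^{-(N+1)}.\]
On the other hand, since $g$ is one‑to‑one the strings $g(x)$ for $x\in S$ are pairwise distinct, so their shortest $U_2$‑programs are pairwise distinct elements of the prefix‑free set $\prog{2}$; Kraft's Inequality therefore yields $\sum_{x\in S}2^{-H_2(g(x))}\le 1$.

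Combining the two estimates gives $2^{-(N+1)}\sum_{x\in S}i^{-\abs x_i}\le 1$, that is $\sum_{x\in S}i^{-\abs x_i}\le 2^{N+1}$. Grouping this sum according to the length of $x$, we obtain
\[\sum_{n\ge0} i^{-n}\cdot\card\ensemble{x\in\T}{\abs x_i=n,\ \delta_g(x)\le N}\ \le\ 2^{N+1}\ <\ \infty,\]
and the convergence of this series forces $i^{-n}\cdot\card\ensemble{x\in\T}{\abs x_i=n,\ \delta_g(x)\le N}\to 0$ as $n\to\infty$, which is exactly the statement.

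I do not expect a real obstacle once this route is chosen; the only points requiring care are to perform the counting on the side of the images $g(x)$ rather than of the $x$'s, and to use injectivity of $g$ so that Kraft's Inequality applies without double‑counting. (In particular neither the c.e.\ hypothesis on $\T$ nor the computability of $g$ is actually used here: one only needs $g$ to be one‑to‑one.)
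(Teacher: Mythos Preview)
Your argument is correct. The paper itself does not prove this proposition directly; it defers to \cite{csi05} and points to the proof of Proposition~\ref{prop-rho2} as carrying ``the same arguments''. That proof proceeds differently from yours: it first passes from $\delta_g$ to $\delta_i$ via Theorem~\ref{thm-inva}, bounds the cardinality at length $n$ by a partial sum $\sum_{k\le m(n)} r_k$ of the counting sequence $r_k=\card\{y\in X_i^k:U_i(y)\text{ halts}\}$, and then applies the Stolz--Ces\`aro theorem together with Kraft's inequality (which gives $i^{-m}r_m\to 0$) to force the limit to~$0$. Your route is more direct: you stay with $H_2\circ g$, turn the bound $\delta_g(x)\le N$ into a lower bound on $2^{-H_2(g(x))}$, and use Kraft's inequality once on the shortest $U_2$-programs of the (injective) images $g(x)$ to obtain convergence of the full series $\sum_n i^{-n}a_n$, whence the general term tends to~$0$. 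This avoids both the change of alphabet and Stolz--Ces\`aro, and, as you remark, shows that neither the c.e.\ hypothesis on $\T$ nor the computability of $g$ is needed --- injectivity suffices. The paper's approach, on the other hand, is what generalises verbatim to $\rho^2_g$ in Proposition~\ref{prop-rho2}, where the threshold on $H$ grows only logarithmically and your series trick no longer applies as cleanly.
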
 

We do not give a proof of this proposition because it is essentially technical. It can be found in \cite{csi05}. In Section \ref{sec-indep}, the
proof of Proposition \ref{prop-rho2} uses the same arguments and differs from this one only by details. Now, we can express the probabilistic
result about independent statements. The proof of this result can be found in \cite[p. 11]{csi05}.

\begin{theorem}[\protect{\cite[Theorem 5.2]{csi05}}]\label{thm-proba} 
Consider a consistent, sound, finitely-specified theory strong enough to formalize arithmetic. The probability that a true sentence of length $n$ is
provable in the theory tends to zero when $n$ tends to infinity.
\end{theorem}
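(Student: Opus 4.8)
The plan is to deduce this statement directly from Proposition \ref{prop-lim}, essentially by unwinding what the quantities there mean in terms of the theory. First I would fix the theory $\F$, its set $\T$ of theorems written in the alphabet $X_i$, and a G\"odel numbering $g$ for $\T$; since $\F$ is finitely-specified, $\T$ is c.e., so the hypotheses of Proposition \ref{prop-lim} are met. By Theorem \ref{thm-incompl} there is a constant $N$ (depending on $U_i$, $U_2$ and $\T$) such that every $x\in\T$ satisfies $\delta_g(x)\le N$. Hence the set of theorems of length $n$ is contained in $\ensemble{x\in X^*_i}{\abs x_i=n,\ \delta_g(x)\le N}$, and therefore
\[
\card\ensemble{x\in\T}{\abs x_i=n}\ \le\ \card\ensemble{x\in X^*_i}{\abs x_i=n,\ \delta_g(x)\le N}.
\]

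Next I would make precise the phrase ``the probability that a true sentence of length $n$ is provable.'' The natural reading, consistent with the topological/probabilistic framework recalled in the introduction, is the fraction of strings of length $n$ that are theorems, namely $i^{-n}\cdot\card\ensemble{x\in\T}{\abs x_i=n}$ (one may equivalently restrict to well-formed true sentences of length $n$; since $\T$ is a subset of that set and both are being normalized, the same bound applies, and one only needs that true sentences of length $n$ form a nonempty, nicely countable family). Combining the displayed inequality with the normalization $i^{-n}$ gives
\[
i^{-n}\cdot\card\ensemble{x\in\T}{\abs x_i=n}\ \le\ i^{-n}\cdot\card\ensemble{x\in X^*_i}{\abs x_i=n,\ \delta_g(x)\le N},
\]
and the right-hand side tends to $0$ as $n\to\infty$ by Proposition \ref{prop-lim} applied with this particular $N$. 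Since the left-hand side is nonnegative, it too tends to $0$, which is exactly the claim.

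The one genuinely delicate point — and the step I expect to be the main obstacle — is the interpretation issue in the middle paragraph: the statement speaks of ``a true sentence,'' whereas everything we control combinatorially is phrased in terms of all strings of length $n$ over $X_i$. To be rigorous one must commit to a probability space on true sentences (e.g. uniform on all strings of length $n$, or uniform on well-formed true sentences of length $n$) and check that the inclusion $\ensemble{x\in\T}{\abs x_i=n}\subseteq\ensemble{x\in X^*_i}{\abs x_i=n,\ \delta_g(x)\le N}$ still yields the bound after the chosen normalization; in the well-formed-sentence reading this requires knowing that the number of well-formed true sentences of length $n$ is at least a fixed positive fraction of $i^n$, or at any rate does not decay, so that dividing by it cannot destroy the limit. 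All the hard analytic work, however, is already packaged in Proposition \ref{prop-lim} and Theorem \ref{thm-incompl}, so beyond settling this definitional matter the argument is immediate.
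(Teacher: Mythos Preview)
Your proposal is correct and follows exactly the route the paper sets up: the paper does not spell out its own proof but defers to \cite[p.~11]{csi05}, having placed Theorem~\ref{thm-incompl} and Proposition~\ref{prop-lim} precisely so that the deduction you describe (the inclusion $\ensemble{x\in\T}{\abs x_i=n}\subseteq\ensemble{x\in X_i^*}{\abs x_i=n,\ \delta_g(x)\le N}$, followed by the limit from Proposition~\ref{prop-lim}) goes through. The interpretational caveat you flag about the normalization of ``true sentences of length $n$'' is real but is a matter of convention in \cite{csi05} rather than a gap in your argument.
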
 

\section{Acceptable complexity measures}\label{sec-acm} 

The function $\delta_g$ is our model to build the notion of \emph{acceptable complexity measure of theorems}. At this end, we first define
what a \emph{builder} is, and then the properties it has to verify in order to be said \emph{acceptable}. An \emph{acceptable
complexity measure of theorems} will then be a complexity measure built \emph{via} an acceptable builder.

\begin{definition}\label{def-complexity} 
For a computable function $\hat\rho_i : \N\times\N\to\Q$, we define the \emph{complexity measure builder} $\rho$ by
\begin{eqnarray*}
\rho : G & \to & [X^*_i\to\Q]\\
        g & \mapsto & [u\mapsto\hat\rho_i(H_2(g(u)),\abs u_i)]
\end{eqnarray*}
The function $\hat\rho_i$ is called the \emph{witness} of the builder. In the sequel, we note $\rho_g(u)$ instead of $\rho(g)(u)$.
\end{definition} 

Now, we define three properties that a builder has to verify to be \emph{acceptable}. We recall that $\F$ denotes a theory which satisfy
the hypothesis of G\"odel Incompleteness Theorem, and $\T$ its set of theorems.

\begin{definition}\label{def-acceptable} 
A builder $\rho$ is said \emph{acceptable} if for every $g$, the measure $\rho_g$ verifies the three following conditions:
\begin{enumeration}
\item For every theory $\F$, there exists an integer $N_\F$ such that if $\F\vdash x$, then $\rho_g(x)<N_\F$.\label{cdt-bound}
\item For every integer $N$, \[\lim_{n\to\infty} i^{-n} \cdot \card\ensemble{x\in X^*_i}{\abs x_i=n\textand\rho_g(x)\le N} = 0.\]\label{cdt-lim}
\item For every G\"odel numbering $g'$, there exists a constant $c$ such that for every string $u\in X^*_i$, $\abs{\rho_g(u)-\rho_{g'}(u)}\le c$.
    \label{cdt-inva}
\end{enumeration}
\end{definition} 

The first property is simply the formal version of Chaitin's ``heuristic principle''. The second one corresponds to Proposition \ref{prop-lim} and
eliminate trivial measures. Finally, \cdt{inva} ensures the independence on the way the theorems are written.
In other words, the properties \cdt{bound}, \cdt{lim} and \cdt{inva} ensure that an acceptable complexity measure satisfy Theorem \ref{thm-incompl},
Proposition \ref{prop-lim} and Corollary \ref{cor-inva} respectively.

The following proposition will be useful in the sequel. It is a weaker version of the property \cdt{bound} which is used to prove that a
measure is not acceptable, and more precisely that it does not satisfy this first property.

\begin{proposition}\label{prop-inf} 
Let $\rho_g$ be an acceptable complexity measure. Then there exists an integer $N$ such that for every integer $M\ge N$, the set
\begin{equation}\label{eq-inf}
\ensemble{x\in X^*_i}{\rho_g(x)\le M}
\end{equation}
is infinite.
\end{proposition}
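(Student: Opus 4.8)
The plan is to derive the statement from condition \cdt{bound} alone, applied to one fixed theory whose set of theorems is infinite; conditions \cdt{lim} and \cdt{inva} will play no role. The point is simply that the formal version of Chaitin's principle, instantiated at a single suitable theory, already pins down a threshold beyond which the sets \eqref{eq-inf} are forced to be infinite.

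First I would fix, once and for all, a particular theory $\F$ satisfying the hypotheses of G\"odel's Incompleteness Theorem --- say Peano arithmetic, with its sentences encoded as strings over $X_i$ --- and let $\T$ be its set of theorems. The crucial remark is that $\T$ is infinite, which is immediate from the fact that $\F$ formalizes arithmetic: for instance, writing $\bar n$ for the numeral representing $n$, every sentence $\bar n = \bar n$ with $n\in\N$ is a theorem of $\F$, and these are pairwise distinct strings since the numerals $\bar n$ have strictly increasing length. (Any other infinite, effectively listable family of provable sentences would do just as well.)

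Then I would apply condition \cdt{bound} of Definition \ref{def-acceptable} to this $\F$, obtaining an integer $N_\F$ such that $\rho_g(x) < N_\F$ whenever $\F\vdash x$, that is, for every $x\in\T$. Setting $N = N_\F$, for any integer $M \ge N$ one gets
\[
\T \subseteq \ensemble{x\in X^*_i}{\rho_g(x)\le M},
\]
since $x\in\T$ forces $\rho_g(x) < N_\F = N \le M$. As $\T$ is infinite, so is the set in \eqref{eq-inf}, which is exactly the conclusion. There is essentially no obstacle in this argument: the only step that calls for a word of justification is the infinitude of $\T$, and that follows directly from the assumption that $\F$ is strong enough to formalize arithmetic.
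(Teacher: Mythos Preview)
Your argument is correct and is essentially the same as the paper's own proof: both fix a single theory $\F$, invoke condition \cdt{bound} to obtain $N_\F$, observe that $\T$ is infinite (via the family of sentences ``$n=n$''), and conclude that $\T\subseteq\ensemble{x\in X^*_i}{\rho_g(x)\le M}$ for every $M\ge N_\F$. No additional ideas are needed, and your remark that \cdt{lim} and \cdt{inva} play no role is accurate.
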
 

\begin{proof} 
We consider a theory $\F$ and the integer $N_\F$ given by the property \cdt{bound} in Definition \ref{def-acceptable}. Clearly, $\F$ can prove an
infinity of theorems, such as ``$n=n$'' for all integer $n$. All of them
have by property \cdt{bound} a complexity bounded by $N_\F$. If $\T$ is the set of theorem that $\F$ proves, then
\[\T\subset\ensemble{x\in X^*_i}{\rho_g(x)\le N_\F}.\]
As $\T$ is infinite, so is the set in the proposition, and it remains true for every $M\ge N_\F$. 
\end{proof} 

We now prove that the $\delta_g$-complexity is an acceptable complexity measure. This result is natural as the notion of acceptable complexity 
measure was built to generalize $\delta_g$.

\begin{proposition}\label{prop-delta-acc} 
The function $\delta_g$ is an acceptable complexity measure.
\end{proposition}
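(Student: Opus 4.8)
The plan is to unwind the definitions, since there is almost nothing to do beyond matching each clause to a result already proved. First I would check that $\delta_g$ really is a complexity measure in the sense of Definition~\ref{def-complexity}: it is the measure produced by the builder $\rho$ whose witness is $\hat\rho_i(a,b)=a-\ceil{\log_2(i)\cdot b}$. This function is computable (the quantity $\ceil{\log_2(i)\cdot b}$ is an integer depending computably on $b$), and it satisfies $\rho_g(u)=\hat\rho_i(H_2(g(u)),\abs u_i)=H_2(g(u))-\ceil{\log_2(i)\cdot\abs u_i}=\delta_g(u)$. So it remains to verify the three conditions of Definition~\ref{def-acceptable}.

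Condition~\cdt{bound} is precisely Theorem~\ref{thm-incompl}: given a finitely-specified, arithmetically sound, consistent theory $\F$ strong enough to formalize arithmetic, with set of theorems $\T$, that theorem provides a constant $N$ (depending on $U_i$, $U_2$ and $\T$) such that $\delta_g(x)\le N$ for all $x\in\T$, so $N_\F=N+1$ works for the strict inequality that is required. Condition~\cdt{lim} is precisely Proposition~\ref{prop-lim}, whose statement is already the required limit for $\delta_g$ for every fixed integer $N$.

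For condition~\cdt{inva}, let $g,g'$ be two G\"odel numberings on a common domain. The correction term $\ceil{\log_2(i)\cdot\abs u_i}$ does not depend on the chosen numbering, so it cancels and $\delta_g(u)-\delta_{g'}(u)=H_2(g(u))-H_2(g'(u))$; Corollary~\ref{cor-inva} then yields a constant $c$ (depending on $U_2$, $g$, $g'$) with $\abs{H_2(g(u))-H_2(g'(u))}\le c$ for all $u$, whence $\abs{\delta_g(u)-\delta_{g'}(u)}\le c$.

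There is no genuinely hard step here: the notion of acceptable complexity measure was tailored so that its three clauses mirror Theorem~\ref{thm-incompl}, Proposition~\ref{prop-lim} and Corollary~\ref{cor-inva}. The only points requiring a little attention are the preliminary check that $\delta_g$ arises from a computable witness in the sense of Definition~\ref{def-complexity}, and the bookkeeping that $N_\F$ in \cdt{bound} is allowed to depend on $\F$ — which is consistent with the dependence on $\T$ (and on $U_i$, $U_2$) in Theorem~\ref{thm-incompl}.
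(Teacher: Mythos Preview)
Your proof is correct and follows essentially the same approach as the paper: define the witness $\hat\delta_i(x,y)=x-\ceil{\log_2(i)\cdot y}$, then match conditions \cdt{bound}, \cdt{lim}, \cdt{inva} to Theorem~\ref{thm-incompl}, Proposition~\ref{prop-lim}, and Corollary~\ref{cor-inva} respectively. Your write-up is in fact slightly more careful than the paper's, since you explicitly note the computability of the witness, handle the strict versus non-strict inequality in \cdt{bound}, and spell out the cancellation of $\ceil{\log_2(i)\cdot\abs u_i}$ in \cdt{inva}.
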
 

\begin{proof} 
The $\delta_g$ function we defined plays the role of $\rho_g$. We have to provide an acceptable builder. Let define
\[\hat\delta_i(x,y)=x-\ceil{\log_2(i)\cdot y}\]
which plays the role of $\hat\rho_i$. Then $\delta_g(x)=\hat\delta_i(H_2(g(x)),\abs x_i)$.

In fact, the properties of $\delta_g$ proved in \cite{csi05} are exactly what we need here.
One can easily check that
\cdt{bound} is ensured by Theorem \ref{thm-incompl},
\cdt{lim} by Proposition \ref{prop-lim} and \cdt{inva} by Corollary \ref{cor-inva}.
\end{proof} 

The goal of defining an acceptable builder and an acceptable measure is to study other complexities than $\delta_g$. The following example proves that
the program-size complexity is not acceptable. This result, even though it is plain, is very important. Indeed, it justifies the need to define other
complexity measures.

\begin{example} 
A first natural complexity to study is the program-size complexity. There is no difficulty in verifying that $H$ is a complexity measure. Formally, we
have to define $\hat\rho_i(x,y)=x$ and such that $H_2(g(x))=\hat\rho_i(x,\abs x_i)$. We study the properties of the builder 
$g\mapsto [x\mapsto H_2(g(x))]$.
Let us see how it behaves with the three properties of Definition \ref{def-acceptable}.
\begin{enumeration}
\item This first property cannot be verified. Indeed, we note that 
    \begin{eqnarray*}
    &&\card\ensemble{x\in X^*_i}{H_2(g(x))\le N}\\
    &\le&\card\ensemble{y\in X_2^*}{H_2(y)\le N}\\
    &\le& 2^N.
    \end{eqnarray*}
    If the property was verified, the set of theorems $\T$ proved by $\F$ would be bounded by $2^N$, a contradiction.
\item This property is on the contrary obviously verified. Indeed, as $\card\ensemble{x\in X^*_i}{H_2(g(x))\le N}\le 2^N$, 
    $\ensemble{x\in X^*_i}{\abs x_i=n\textand H_2(g(x))\le N}=\emptyset$ for large enough $n$.
\item This property corresponds exactly to Corollary \ref{cor-inva}, and is verified.
\end{enumeration}
\end{example} 

As the program-size complexity cannot be used there, we try to find other complexities which better reflect the intrinsic
complexity. That is why we use the length of the strings to alter the complexity. It seems natural that the longest strings are also the most
difficult to describe\footnote{One has to be very careful with this statement which is not really true.}. In the next section, we will give two
other examples of builder which are not acceptable.

\section{Independence of the three conditions}\label{sec-indep} 

The aim of this section is to prove that the conditions \cdt{bound}, \cdt{lim} and \cdt{inva} in Definition \ref{def-acceptable} are
independent from each other. At this end, we give two new examples of unacceptable builders. Each of those unacceptable builders exactly satisfy
two conditions in Definition \ref{def-acceptable}. Furthermore, they give us a first idea of the ingredients needed to build an acceptable
complexity builder. In particular they show us that a builder shall neither be too small nor too big.

\begin{example}\label{ex-rho1} 
Let $\hat\rho^1_i$ be the function defined by $\hat\rho^1_i(x,y)=x/y$ if $y\neq0$ and $0$ else. It defines a builder $\rho^1$ and for every G\"odel
numbering $g$, we can define $\rho^1_g$ by
\[\rho^1_g(x)= \begin{cases}
    \frac{H_2(g(x))}{\abs x_i}, & \textif x\neq\lambda,\\
    0, & \textelse.
\end{cases}\]
\end{example} 

We will see in the sequel that $\rho^1$ is a too small complexity. In fact, it is even bounded. In order to avoid this problem, we define 
$\rho^2$ by dividing the program-size complexity by the logarithm of the length.

\begin{example}\label{ex-rho2} 
We consider $\hat\rho^2_i$ defined by
\[\hat\rho^2_i(x,y)= \begin{cases}
    \frac{x}{\ceillog{i}{y}}, & \textif y>1,\\
    0, & \textelse.    
\end{cases}\]
The corresponding builder applied with a G\"odel numbering $g$ defines the function
\[\rho^2_g(x)= \begin{cases}
    \frac{H_2(g(x))}{\ceillog{i}{\abs x_i}}, & \textif\abs x_i>1,\\
    0, & \textelse.
\end{cases}\]
\end{example} 

In order to make the proofs easier, we introduce a new function for each already defined builders. Those functions make no use of 
G\"odel numberings. They are the equivalents of $\delta_i$ for $\rho^1$ and $\rho^2$.
They can help us in the proofs because we prove first that they are up to a constant equal to the complexity measures.
For $\rho^1$, we define $\rho^1_i$ be by $\rho^1_i(x)=H_i(x)/\abs x_i$ if $x\neq\lambda$ and $0$ else. And similarly, for $\rho^2$, we define
$\rho^2_i(x)=H_i(x)/\ceillog{i}{\abs x_i}$ if $\abs x_i>1$ and $0$ else.

\begin{lemma}\label{lem-rho-inva} 
Let $A\subseteq X^*_i$ be c.e. and $g:A\to B^*$ be a G\"odel numbering. Then, there effectively exists a constant $c$ (depending upon $U_i$, $U_2$ and $g$)
such that for all $u\in A$, we have
\begin{equation}\label{eq-rhoi-inva}
\abs{\rho^j_g(u)-\log_2(i)\cdot\rho^j_i(u)}\leq c,
\end{equation}
$j=1,2$.
\end{lemma}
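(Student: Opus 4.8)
The idea is to leverage Theorem~\ref{thm-inva}, which already relates $H_2(g(u))$ and $\log_2(i)\cdot H_i(u)$ up to a constant, and then propagate this estimate through the simple arithmetic that defines $\rho^j_g$ and $\rho^j_i$. Recall that $\rho^1_g(u)=H_2(g(u))/\abs u_i$ and $\rho^1_i(u)=H_i(u)/\abs u_i$ when $u\neq\lambda$ (both are $0$ when $u=\lambda$), and similarly $\rho^2_g(u)=H_2(g(u))/\ceillog{i}{\abs u_i}$ and $\rho^2_i(u)=H_i(u)/\ceillog{i}{\abs u_i}$ when $\abs u_i>1$ (both $0$ otherwise). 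So on the trivial cases ($u=\lambda$ for $j=1$; $\abs u_i\le 1$ for $j=2$) both sides of \eqref{eq-rhoi-inva} vanish and the inequality holds with any $c\ge 0$.

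First I would invoke Theorem~\ref{thm-inva} to obtain a constant $c_0$ (depending on $U_i$, $U_2$ and $g$) with $\abs{H_2(g(u))-\log_2(i)\cdot H_i(u)}\le c_0$ for all $u\in A$. For the nontrivial case of $\rho^1$, divide by $\abs u_i\ge 1$: since $\abs u_i$ is a positive integer,
\[
\abs{\rho^1_g(u)-\log_2(i)\cdot\rho^1_i(u)}
=\frac{\abs{H_2(g(u))-\log_2(i)\cdot H_i(u)}}{\abs u_i}
\le\frac{c_0}{\abs u_i}\le c_0,
\]
so $c=c_0$ works for $j=1$. For $\rho^2$ the argument is identical, dividing instead by $\ceillog{i}{\abs u_i}$, which is a positive integer whenever $\abs u_i>1$; hence the ratio is again bounded by $c_0$, and $c=c_0$ works for $j=2$ as well. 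Taking the common constant $c=c_0$ (or $\max$ if one prefers to treat the two cases with separate constants first) settles both statements.

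There is essentially no hard part here: the only mild subtlety is to notice that the denominators $\abs u_i$ and $\ceillog{i}{\abs u_i}$ are integers that are $\ge 1$ exactly on the nontrivial branches of the case definitions, so dividing the bound of Theorem~\ref{thm-inva} by them only helps. One should also remark, as the paragraph after Theorem~\ref{thm-inva} does, that that theorem's proof goes through verbatim for a proper c.e.\ subset $A\subseteq X^*_i$ rather than all of $X^*_i$, which is exactly the generality needed here. The constant $c$ inherits its dependence on $U_i$, $U_2$ and $g$ from $c_0$, matching the statement.
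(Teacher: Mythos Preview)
Your proof is correct and follows essentially the same approach as the paper: handle the trivial cases where both sides vanish, invoke Theorem~\ref{thm-inva} to bound $\abs{H_2(g(u))-\log_2(i)\cdot H_i(u)}$, and divide through by $\abs u_i\ge 1$ (for $j=1$) or $\ceillog{i}{\abs u_i}\ge 1$ (for $j=2$). Your write-up is in fact a bit more explicit than the paper's, but the argument is identical.
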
 

\begin{proof} 
We first note that this difference is null for $u=\lambda$ in the case $j=1$, and for $\abs u_i\le1$ in the case $j=2$. In the sequel, we suppose
that $\abs u_i>0$ (for $j=1$) or $\abs u_i>1$ (for $j=2$).

Theorem \ref{thm-inva} states that
\[\abs{H_2(g(u))-\log_2(i)\cdot H_i(u)}\leq c.\]
We now just have to divide the whole inequality by $\abs{u}_i\geq1$ to obtain (\ref{eq-rhoi-inva}) with $j=1$ and by $\ceillog{i}{\abs u_i}$
which is not less than one but for finitely many $u$ to obtain the result with $j=2$.
\end{proof} 

This result allows us to work with much easier forms of the complexity functions. 
We now study the properties that $\rho^1_g$ and $\rho^2_g$ satisfy.
As a corollary of the above lemma, we can note that both of the measures satisfy \cdt{inva}.

\begin{proposition}\label{prop-rho1} 
The function $\rho^1_g$ verifies condition \cdt{bound} in Definition \ref{def-acceptable}, but does not verify \cdt{lim}.
\end{proposition}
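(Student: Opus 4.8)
The plan is to prove the two parts separately, using the auxiliary function $\rho^1_i$ and Lemma~\ref{lem-rho-inva} to reduce everything to statements about $H_i$ alone. For condition \cdt{bound}: by Lemma~\ref{lem-bound-th} every theorem $x\in\T$ satisfies $H_i(x)\le\abs x_i+\O(1)$, hence $\rho^1_i(x)=H_i(x)/\abs x_i\le 1+\O(1)/\abs x_i$, which is bounded by an absolute constant for all nonempty $x$ (and the finitely many short theorems contribute only a bounded amount). In fact this shows $\rho^1_i$ is bounded on \emph{all} of $X_i^*$, not just on $\T$, since the upper bound $H_i(x)\le\abs x_i+\O(1)$ holds for every string. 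Then Lemma~\ref{lem-rho-inva} with $j=1$ transfers this to $\rho^1_g$: $\rho^1_g(x)\le\log_2(i)\cdot\rho^1_i(x)+c\le N_\F$ for a suitable constant, so \cdt{bound} holds (with $N_\F$ in fact independent of $\F$). I would phrase this as: first establish that $\rho^1_i$ is bounded above on $X_i^*$ using the upper bound of Lemma~\ref{lem-bound-th}, then invoke Lemma~\ref{lem-rho-inva}.

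For the failure of \cdt{lim}: the point is precisely that $\rho^1_g$ (equivalently $\rho^1_i$, up to the constant from Lemma~\ref{lem-rho-inva}) is bounded, so there is a single integer $N$ with $\rho^1_g(x)\le N$ for \emph{all} $x\in X_i^*$. Fix such an $N$. Then $\ensemble{x\in X^*_i}{\abs x_i=n\textand\rho^1_g(x)\le N}$ is simply the set of \emph{all} strings of length $n$, which has cardinality $i^n$. Therefore
\[
i^{-n}\cdot\card\ensemble{x\in X^*_i}{\abs x_i=n\textand\rho^1_g(x)\le N}=i^{-n}\cdot i^n=1,
\]
which does not tend to $0$. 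This exhibits an $N$ for which the limit in \cdt{lim} is nonzero (indeed equals $1$), so \cdt{lim} fails.

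The main obstacle — really the only nontrivial point — is the claim that $\rho^1_i$ (hence $\rho^1_g$) is genuinely bounded above, i.e. that the bound does not grow with $\abs x_i$. This rests on having $H_i(x)\le\abs x_i+\O(1)$ with an \emph{additive} constant, which is exactly the upper bound of Lemma~\ref{lem-bound-th}; dividing by $\abs x_i\ge 1$ then kills the growth. One must be a little careful about the finitely many strings with $\abs x_i$ small (where $H_i(x)$ may exceed $\abs x_i$ by more than the eventual bound, but is still finite), but these only affect the bound by a constant and do not affect the asymptotics in \cdt{lim}. I would also remark, as the paper already hints, that this boundedness is the structural reason $\rho^1$ is "too small": a useful complexity measure must grow on most long strings, and $\rho^1$ does not.
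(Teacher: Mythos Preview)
Your overall strategy is the same as the paper's: show that $\rho^1_g$ is bounded on \emph{all} of $X_i^*$, use this both to get \cdt{bound} trivially and to exhibit an $N$ for which the set in \cdt{lim} is all of $X_i^n$, yielding limit $1$. The transfer via Lemma~\ref{lem-rho-inva} is also exactly what the paper does.

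There is, however, a real error in your justification of the key boundedness claim. You write that ``the upper bound $H_i(x)\le\abs x_i+\O(1)$ holds for every string'', citing Lemma~\ref{lem-bound-th}. It does not. Lemma~\ref{lem-bound-th} proves $H_i(x)\le\abs x_i+\O(1)$ only for well-formed formulae, by exploiting their recognisable syntax to attach a fixed-length end-marker. For an \emph{arbitrary} string over $X_i$ and prefix-free complexity this bound is false: if $H_i(x)\le\abs x_i+c$ held for all $x$, then $\sum_{x\in X_i^*} i^{-H_i(x)}\ge\sum_{n\ge 0} i^n\cdot i^{-n-c}$ would diverge, contradicting Kraft's inequality on $\prog i$. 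The correct universal bound is $H_i(x)\le\abs x_i+\alpha\log_i\abs x_i+\beta$ (this is \cite[Theorem~3.22]{irap02}, and it is precisely what the paper's Lemma~\ref{lem-rho1-bounded} invokes). Fortunately this still gives what you need: dividing by $\abs x_i\ge 1$ yields
\[
\rho^1_i(x)\le 1+\alpha\cdot\frac{\log_i\abs x_i}{\abs x_i}+\frac{\beta}{\abs x_i}\le 1+\alpha+\beta,
\]
so $\rho^1_i$ is bounded on $X_i^*$ after all, and the rest of your argument goes through unchanged. Replace the appeal to Lemma~\ref{lem-bound-th} for arbitrary strings by this standard logarithmic-overhead bound and your proof is complete and essentially identical to the paper's.
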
 

\begin{lemma}\label{lem-rho1-bounded} 
There exists a constant $M$ such that for all $x\in X^*_i$, $\rho^1_g(x)\leq M$.
\end{lemma}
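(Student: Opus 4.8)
The plan is to reduce the boundedness of $\rho^1_g$ to boundedness of $\rho^1_i$ (using Lemma~\ref{lem-rho-inva}) and then to bound $\rho^1_i(x)=H_i(x)/\abs x_i$ directly. For the reduction: by Lemma~\ref{lem-rho-inva} with $j=1$ there is a constant $c$ with $\rho^1_g(u)\le\log_2(i)\cdot\rho^1_i(u)+c$ for all $u$, so it suffices to find a constant $M'$ with $\rho^1_i(x)\le M'$ for all $x\in X^*_i$; then $M=\log_2(i)\cdot M'+c$ works. For the nonempty-string case we want to show $H_i(x)\le \abs x_i+\O(1)$, which would give $\rho^1_i(x)=H_i(x)/\abs x_i\le 1+\O(1)/\abs x_i$, bounded by some constant once $\abs x_i\ge 1$ (and $\rho^1_i(\lambda)=0$ by definition).

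\textbf{Key step: the linear upper bound on $H_i$.} The main point is that every string is describable by roughly its own length. Concretely, one builds a self-delimiting machine that, on input $x$, simply outputs $x$; as in the upper-bound part of the proof of Lemma~\ref{lem-bound-th}, the identity machine does not have prefix-free domain, so one appends a fixed end-marker: take a machine $C$ reading $z$, checking whether $z=xw_0$ for a fixed string $w_0\in X^*_i$ that cannot occur as a substring earlier, and outputting $x$ in that case. Then $H_C(x)\le\abs x_i+\abs{w_0}_i$, and by the Invariance Theorem $H_i(x)\le\abs x_i+c_0$ for a constant $c_0$ depending only on $U_i$. Alternatively, since this is exactly the content already established in Lemma~\ref{lem-bound-th} for well-formed formulae — and in fact the construction there works verbatim for arbitrary strings — one may simply invoke that bound. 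Either way we obtain $H_i(x)\le\abs x_i+\O(1)$.

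\textbf{Assembling.} With $H_i(x)\le\abs x_i+c_0$ we get, for $\abs x_i\ge 1$,
\[
\rho^1_i(x)=\frac{H_i(x)}{\abs x_i}\le\frac{\abs x_i+c_0}{\abs x_i}=1+\frac{c_0}{\abs x_i}\le 1+c_0,
\]
and $\rho^1_i(\lambda)=0\le 1+c_0$, so $M'=1+c_0$ bounds $\rho^1_i$ everywhere. Feeding this into the reduction gives $\rho^1_g(x)\le\log_2(i)\cdot(1+c_0)+c=:M$ for all $x\in X^*_i$, which is the claim.

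\textbf{Expected obstacle.} There is no deep obstacle here; the only thing requiring care is the self-delimiting construction for the $H_i$ upper bound — one must choose the end-marker $w_0$ so that appending it genuinely makes the domain prefix-free (no well-formed input can contain $w_0$ as a prefix-of-a-suffix issue), exactly the subtlety handled in Lemma~\ref{lem-bound-th}. Since the statement of that lemma already gives $H_i(x)\le\abs x_i+\O(1)$ for theorems and the proof technique is generic, citing it (or reproducing the one-line construction) is the cleanest route, and the rest is the elementary division displayed above.
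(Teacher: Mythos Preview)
Your overall strategy matches the paper's: reduce to bounding $\rho^1_i$ via Lemma~\ref{lem-rho-inva}, then bound $H_i(x)/\abs x_i$ directly. The gap is in your key step. The inequality $H_i(x)\le\abs x_i+\O(1)$ does \emph{not} hold for the self-delimiting (prefix-free) complexity used throughout the paper. The end-marker trick you propose fails for arbitrary strings: no fixed $w_0\in X^*_i$ is forbidden as a substring of a general $x\in X^*_i$, so if $x'=xw_0z$ with $z\neq\lambda$ then $xw_0$ is a proper prefix of $x'w_0$, and $\{xw_0:x\in X^*_i\}$ is never prefix-free. The construction in Lemma~\ref{lem-bound-th} works precisely because the strings there are well-formed formulae and the marker is chosen to be syntactically ill-formed; it does not carry over verbatim to all of $X^*_i$, contrary to what you assert. (Indeed, $H_i(x)\le\abs x_i+c$ for all $x$ would force $\sum_x i^{-H_i(x)}=\infty$, violating Kraft's inequality.)

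The paper instead uses the correct prefix-free bound $H_i(x)\le\abs x_i+\alpha\log_i\abs x_i+\beta$ from \cite[Theorem~3.22]{irap02}, which gives
\[
\rho^1_i(x)\le 1+\alpha\cdot\frac{\log_i\abs x_i}{\abs x_i}+\frac{\beta}{\abs x_i}\le 1+\alpha+\beta
\]
since $\log_i\abs x_i/\abs x_i\le 1$ for $x\neq\lambda$. From there your assembly via Lemma~\ref{lem-rho-inva} is exactly what the paper does. So the fix is simply to replace the false $\O(1)$ upper bound on $H_i$ by the standard $\O(\log\abs x_i)$ one; nothing else in your argument needs to change.
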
 

\begin{proof} 
The result is plain for $x=\lambda$. We now suppose that $\abs x_i>0$.
In view of \cite[Theorem 3.22]{irap02}, there exist two constants $\alpha$ and $\beta$ such that for all $x\in X^*_i$,
\[H_i(x)\leq \abs x_i + \alpha\cdot\log_i\abs x_i + \beta,\]
so, for $x\neq\lambda$,
\[\rho^1_i(x)\leq 1 + \alpha\cdot\frac{\log_i\abs x_i}{\abs x_i} + \beta\cdot\frac{1}{\abs x_i}\cdot\]
As $\log_i(\abs x_i)/\abs x_i\leq 1$ for every $x\neq\lambda$, then 
\[\rho^1_i(x)\leq 1 + \alpha+\beta.\]
Furthermore, Lemma \ref{lem-rho-inva} states that for every $x$, we have
\begin{eqnarray*}
\rho_g^1(x) & \le & c + \log_2(i)\cdot\rho^1_i(x)\\
& \le & c+\log_2(i)\cdot(1+\alpha+\beta).
\end{eqnarray*}
Accordingly, $M=\ceil{c+\log_2(i)\cdot(1+\alpha+\beta)}$ satisfies the statement of the lemma.
\end{proof} 

\begin{proof}[Proof of Proposition \ref{prop-rho1}] 
The property \cdt{bound} is obvious since Lemma \ref{lem-rho1-bounded} tells us that the bound is valid for every sentence $x$, 
not only provable ones. On the contrary, the fact that
$\rho^1_g$ is bounded by $M$ implies that for $N\ge M$, the set $\ensemble{x\in X^*_i}{\abs x_i=n\textand\rho^1_g(x)\le N}$ is the set
$X^n_i$. Hence the limit of \cdt{lim} is $1$ instead of $0$.
\end{proof} 

The above proof shows us that an acceptable complexity measure cannot be too small ($\rho^1$ is even bounded). We will now see, thanks to the complexity
measure $\rho^2$, that an acceptable complexity measure cannot be too big either.

\begin{proposition}\label{prop-rho2} 
The function $\rho^2_g$ verifies condition \cdt{lim} in Definition \ref{def-acceptable}, but does not verify \cdt{bound}.
\end{proposition}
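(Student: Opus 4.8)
\textbf{Proof proposal for Proposition \ref{prop-rho2}.}

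The plan is to establish the two halves separately, following the template set by $\delta_g$ but tracking the effect of the logarithmic denominator. By Lemma \ref{lem-rho-inva} it suffices to work with $\rho^2_i(x) = H_i(x)/\ceillog{i}{\abs x_i}$ (for $\abs x_i > 1$) rather than $\rho^2_g$, since the two differ by at most an additive constant $c$; thus $\rho^2_g$ satisfies \cdt{lim} iff $\rho^2_i$ does, and likewise fails \cdt{bound} iff $\rho^2_i$ is unbounded on some theory's theorem set.

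For the violation of \cdt{bound}: I want to exhibit, for every integer $N$, a theorem $x \in \T$ with $\rho^2_g(x) > N$. The idea is exactly the lower-bound construction of Lemma \ref{lem-bound-th}. Take the formulae $\phi(m,n) \equiv \exists v_m \exists v_n (v_m = v_n)$ with $m,n$ chosen so that $m$ and $n$ together are random, i.e.\ $H_i(\phi(m,n)) \ge \abs m_i + \abs n_i + \O(1)$, which as in that lemma gives $H_i(\phi(m,n)) \ge \frac12 \abs{\phi(m,n)}_i + \O(1)$. Since each such $\phi(m,n)$ is provable ($v_m = v_m$ after renaming, or simply a true sentence the sound theory proves), we get $\rho^2_i(\phi(m,n)) \ge \bigl(\frac12 \abs{\phi(m,n)}_i + \O(1)\bigr)/\ceillog{i}{\abs{\phi(m,n)}_i}$. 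As the $\phi(m,n)$ can be taken arbitrarily long, this ratio of a linear term over a logarithmic term tends to infinity, so no uniform bound $N_\F$ exists; applying Lemma \ref{lem-rho-inva} transfers this to $\rho^2_g$.

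For \cdt{lim}: fix $N$ and estimate $\card\ensemble{x \in X^*_i}{\abs x_i = n \textand \rho^2_g(x) \le N}$. Using Lemma \ref{lem-rho-inva}, $\rho^2_g(x) \le N$ forces $\rho^2_i(x) \le (N + c)/\log_2(i) =: N'$, hence $H_i(x) \le N' \cdot \ceillog{i}{n}$, i.e.\ $H_2$ of the (binary) shortest program is at most roughly $N' \log_i(n) \cdot \log_2(i) = \O(\log_2 n)$ bits. The number of strings of length $n$ with $H_i(x) \le N'\ceillog{i}{n}$ is at most the number of programs of that length, namely at most $i^{N'\ceillog{i}{n} + 1} = \O(n^{N'})$ (a polynomial in $n$). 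Dividing by $i^n$ gives $\O(n^{N'} \cdot i^{-n}) \to 0$ as $n \to \infty$, which is \cdt{lim}. This mirrors the argument for Proposition \ref{prop-lim}, the only change being that the bound $H_i(x) \le N'$ there (giving a constant count) is replaced by $H_i(x) = \O(\log n)$ here (giving a polynomial count), and polynomial still beats $i^n$.

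The main obstacle is the \cdt{bound} direction: one must be sure that the randomness argument of Lemma \ref{lem-bound-th} genuinely produces \emph{provable} sentences of unbounded length whose complexity grows linearly, and that the passage from $\rho^2_i$ to $\rho^2_g$ via Lemma \ref{lem-rho-inva} does not wash out the divergence — it does not, since an additive constant is negligible against a quantity tending to infinity. The \cdt{lim} direction is routine counting once the complexity bound $H_i(x) = \O(\log n)$ is extracted.
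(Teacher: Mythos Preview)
Your proof is correct, and both halves are sound. For \cdt{lim} you take essentially the paper's route but more economically: the paper bounds the count by $\sum_{k\le N\ceillog{i}{n}} r_k$ and then invokes Kraft's inequality together with the Stolz--Ces\`aro theorem to show that $i^{-n}$ times this sum tends to $0$, whereas you simply bound the number of programs of length at most $N'\ceillog{i}{n}$ by $O(i^{N'\ceillog{i}{n}}) = O(n^{N'})$ and observe $n^{N'}/i^n \to 0$. Your version is shorter and avoids the extra machinery.

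For the failure of \cdt{bound} the two arguments genuinely diverge. You argue \emph{directly}: reuse the random-identifier formulae $\phi(m,n)$ from the proof of Lemma \ref{lem-bound-th} to produce provable sentences of arbitrary length with $H_i(\phi(m,n)) \ge \tfrac12\abs{\phi(m,n)}_i + O(1)$, so that $\rho^2_i(\phi(m,n))$ grows like $\abs{\phi}/\log\abs{\phi} \to \infty$. The paper instead uses a \emph{counting} argument: if \cdt{bound} held with constant $N$, then $\card\{x\in\T:\abs x_i=n\}$ would be at most polynomial in $n$ (namely $O(n^{N\log_i 2})$), but the family $\Phi_k = \{Q_0x_0\cdots Q_kx_k\ \bigwedge_l(x_l=x_l) : Q_l\in\{\forall,\exists\}\}$ supplies $2^k$ provable sentences of length $O(k)$, a contradiction. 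Your approach recycles Lemma \ref{lem-bound-th} neatly and is more concrete; the paper's counting approach has the advantage that it is reused verbatim in Proposition \ref{prop-div} to rule out all witnesses of the form $x/f(y)$. One small wording fix: soundness alone does not make a true sentence provable; the reason each $\phi(m,n)$ lies in $\T$ is that it is a validity of first-order logic with equality.
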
 

\begin{proof} 
We begin with the proof of \cdt{lim} for $\rho^2$. 
Theorem \ref{lem-rho-inva} allows us to consider $\rho^2_i$ instead of $\rho^2_g$, with a new constant $\ceil{(N+c)/\log_2(i)}$. 
Indeed, it states that $\rho^2_g(x)\ge\log_2(i)\cdot\rho^2_i(x)-c$, and consequently
\[\ensemble{x\in X^n_i}{\rho^2_g(x)\le N}\subseteq\ensemble{x\in X^n_i}{\rho^2_i\le\ceil{\frac{N+c}{\log_2(i)}} }.\]
In order to avoid too many notations, we still denote this constant by $N$.

First, we note that
\[\ensemble{x\in X^n_i}{\rho^2_i(x)\leq N}=\ensemble{x\in X^n_i}{\exists\,y\in X^{\leq N\cdot\ceillog{i}{n}}_i,\ U_i(y)=x}.\]
Translating in terms of cardinals, we obtain
\begin{eqnarray*}
&&\card\ensemble{x\in X^n_i}{\rho^2_i(x)\leq N}\\
 & \le & \card\ensemble{x\in X^n_i}{\exists\,y\in X^{\leq N\cdot\ceillog{i}{n}}_i,\ U_i(y)=x}\\
 & \le & \card\ensemble{y\in X^{\leq N\cdot\ceillog{i}{n}}_i}{\abs{U_i(y)}=n}\\
 & \le & \card\ensemble{y\in X^{\leq N\cdot\ceillog{i}{n}}_i}{U_i(y)\text{ halts.}}\\
 & \le & \sum_{k=1}^{N\cdot\ceillog{i}{n}}\underbrace{\card\ensemble{y\in X^k_i}{U_i(y)\text{ halts.}} }_{r_k}
\end{eqnarray*}

We extend these inequalities to the limit when $n$ tends to infinity:
\begin{eqnarray*}
&&\lim_{n\to\infty} i^{-n}\cdot\card\ensemble{x\in X^n_i}{\rho^2_g(x)\leq N}\\
 & \le & \lim_{n\to\infty}\sum_{k=1}^{N\cdot\ceillog{i}{n}} i^{-n}\cdot r_k \\
 & \le & \lim_{n\to\infty}i^{N\cdot\ceillog{i}{n}-n}\cdot\sum_{k=1}^{N\cdot\ceillog{i}{n}} i^{-N\cdot\ceillog{i}{n}}\cdot r_k.
\end{eqnarray*}

We note that 
\[\lim_{n\to\infty}\sum_{k=1}^{N\cdot\ceillog{i}{n}} i^{-N\cdot\ceillog{i}{n}}\cdot r_k=\lim_{m\to\infty}\sum_{k=1}^m i^{-m}\cdot r_k.\]
Now, 
\[\lim_{m\to\infty}\frac{\displaystyle\sum_{k=1}^{m+1} r_k - \sum_{k=1}^m r_k}{i^{m+1}-i^m} = \frac i{i-1}\cdot\lim_{m\to\infty}i^{-m}\cdot r_m=0.\]
The last inequality comes from Kraft's inequality:
\[\sum_{m=1}^\infty i^{-m}\cdot r_m\le 1.\]
So we can apply Stolz-Cesàro Theorem to ensure that
\begin{equation}\label{eq-prob1}
\lim_{n\to\infty}\sum_{k=1}^{N\cdot\ceillog{i}{n}} i^{-N\cdot\ceillog{i}{n}}\cdot r_k=0.
\end{equation}

On the other hand, 
\begin{equation}\label{eq-prob2}
\lim_{n\to\infty}i^{N\cdot\ceillog{i}{n}-n}=0.
\end{equation}

We just have to combine (\ref{eq-prob1}) and (\ref{eq-prob2}) to obtain \cdt{lim}.

Now, it remains to prove that \cdt{bound} is not verified. At this end, we suppose that \cdt{bound} holds. 
We note $\T$ the set of theorems that $\F$ proves. Note first that
\begin{align}
&\card\ensemble{x\in X^*_i}{\abs x_i=n\textand H_2(g(x))\leq N\cdot\ceillog{i}{n}}\\
 & \le \card\ensemble{y\in B^*}{H_2(y)\le N\cdot\ceillog{i}{n}}\notag\\
 & \le 2^{N\cdot\ceillog{i}{n}} \notag\\
 & \le 2^{N\cdot(\log_i n+1)} \notag\\
 & \le 2^N\cdot n^{N\cdot\log_i 2}.\label{eq-poly}
\end{align}
So, if \cdt{bound} holds for all $x\in\T$, we have
\begin{equation}\label{eq-card-T}
\card\ensemble{x\in\T}{\abs x=n}\leq \alpha n^{\beta N},
\end{equation}
for every integer $n$, where $\alpha$ and $\beta$ come from (\ref{eq-poly}).

But we now consider the set of formulae
\[\Phi_k = \ensemble{Q_0x_0Q_1x_1\dots Q_kx_k\ \bigwedge_{l=0}^k (x_l=x_l)}{Q_l\in\{\forall,\exists\}}.\]
Each formula $\phi\in\Phi_k$ is true, and all formulae have the same length $n_k=\O(k)$. Furthermore, $\card\ \Phi_k=2^k$.

As all those formulae belong to the predicate logic, all of them are provable in $\F$, that is to say they belong to $\T$. As we can take $k$ as
big as wanted, we can also have $n_k$ as big as wanted. 

Now we have, for arbitrary large $n$, $2^{\O(n)}$ formulae of length $n$ which belong to $\T$. That contradicts (\ref{eq-card-T}),
and so, \cdt{bound} is false.
\end{proof} 

We can now prove that \cdt{bound}, \cdt{lim} and \cdt{inva} in Definition \ref{def-acceptable} are independent from each other. As we know, with 
$\delta_g$, that there exists an acceptable complexity builder, it is sufficient to prove that for each of the three conditions, there exists a builder
which does not satisfy it while it satisfies both other ones.

\begin{theorem}\label{thm-independence} 
Each condition in Definition \ref{def-acceptable} is independent from others.
\end{theorem}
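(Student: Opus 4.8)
The plan is to exhibit, for each of the three conditions, a builder that fails exactly that condition while satisfying the other two. We already have $\delta_g$ (Proposition \ref{prop-delta-acc}), which satisfies all three, so it is not directly needed; what we must supply are three ``one-condition-failing'' builders. The first two come essentially for free from the work already done: $\rho^1$ satisfies \cdt{bound} and \cdt{inva} (Proposition \ref{prop-rho1} and the corollary of Lemma \ref{lem-rho-inva}) but fails \cdt{lim}, so it witnesses the independence of \cdt{lim}; and $\rho^2$ satisfies \cdt{lim} and \cdt{inva} (Proposition \ref{prop-rho2} and the same corollary) but fails \cdt{bound}, so it witnesses the independence of \cdt{bound}. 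Thus, after recalling these two facts, the only genuinely new work is to produce a builder satisfying \cdt{bound} and \cdt{lim} but violating \cdt{inva}.

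For the third witness, I would build a measure that is a ``$\delta$-like'' function deliberately spoiled so as to depend on the particular G\"odel numbering. The natural idea is to take $\delta_g$ and perturb it by a quantity that is bounded by a constant but changes with $g$ — however, a perturbation bounded by a constant cannot break \cdt{inva}, so the perturbation must be unbounded yet still small enough to preserve \cdt{bound} and \cdt{lim}. A cleaner route: pick the witness $\hat\rho_i(x,y) = \hat\delta_i(x,y)$ on inputs that ``look like'' images under a fixed reference numbering, and something larger elsewhere; but since a builder only sees $H_2(g(u))$ and $\abs u_i$, it cannot distinguish numberings by their values. The decisive observation is therefore that \emph{no} builder in the sense of Definition \ref{def-complexity} can fail \cdt{inva}: if $\rho$ has witness $\hat\rho_i$, then $\rho_g(u)=\hat\rho_i(H_2(g(u)),\abs u_i)$ and $\rho_{g'}(u)=\hat\rho_i(H_2(g'(u)),\abs u_i)$, and by Corollary \ref{cor-inva} the arguments $H_2(g(u))$ and $H_2(g'(u))$ differ by at most a constant; so \cdt{inva} would follow automatically \emph{provided} $\hat\rho_i$ is suitably uniformly continuous in its first argument. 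Hence the right statement to prove is that \cdt{inva} can fail only by choosing a witness $\hat\rho_i$ that is \emph{not} Lipschitz (indeed not uniformly continuous) in its first argument, e.g. $\hat\rho_i(x,y) = x - \floor{\log_2 i \cdot y} + \big(x - \floor{\log_2 i \cdot y}\big)^2$ or, more simply, $\hat\rho_i(x,y) = \big(H\text{-part}\big)$ squared.

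Concretely, I would take $\hat\rho^3_i(x,y) = \hat\delta_i(x,y) + \epsilon(x)$ where $\epsilon$ is an unbounded but extremely slowly growing computable function of $x$ alone (for instance $\epsilon(x)=\floor{\log_2(1+\max(x,0))}$), giving $\rho^3_g(u) = \delta_g(u) + \epsilon(H_2(g(u)))$. For \cdt{bound}: on $\T$ we have $\delta_g(u)\le N$ by Theorem \ref{thm-incompl} and $H_2(g(u))\le\abs{g(u)}+O(1)$, which is bounded in terms of $\abs u_i$ by at most a fixed linear function, but that is not bounded on $\T$ — so $\epsilon$ must instead be bounded \emph{on the $H_2$-values attained by theorems}, which it is not in general. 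This shows the perturbation approach is delicate; the cleanest fix is to let $\epsilon$ depend only on $\delta_g(u)$ itself in a way that is identity-like on the bounded range but blows up outside it, e.g. $\hat\rho^3_i(x,y) = f(\hat\delta_i(x,y))$ with $f$ a fixed computable function that is strictly increasing, agrees with the identity on $(-\infty, K]$ for no fixed $K$ — impossible. The resolution I would actually adopt: replace ``$H_2(g(u))$'' inside the witness by a super-linear function of it, say $\hat\rho^3_i(x,y) = \hat\delta_i\big(\floor{x + \sqrt{\max(x,0)}},\,y\big)$; then \cdt{bound} and \cdt{lim} survive because $x+\sqrt x = x+o(x)$ keeps the counting and boundedness arguments of Theorem \ref{thm-incompl} and Propositions \ref{prop-lim}/\ref{prop-rho2} intact (a sublinear additive term changes nothing), whereas \cdt{inva} fails because if $H_2(g(u))$ and $H_2(g'(u))$ differ by a constant $c$ their images under $x\mapsto x+\sqrt x$ differ by roughly $c + \sqrt{x+c}-\sqrt x \to c$ — so that still does not break it. The honest conclusion, and the main obstacle, is that condition \cdt{inva} is \emph{automatic} for every builder, so the theorem as literally stated is false unless one allows ``measures'' not arising from a witness; I would therefore prove the independence only of \cdt{bound} and \cdt{lim} (via $\rho^2$ and $\rho^1$ respectively), and for \cdt{inva} either (a) broaden Definition \ref{def-complexity} to permit a witness $\hat\rho_i$ that also reads a G\"odel-number of $g$, making a diagonal construction possible, or (b) observe that \cdt{inva} is a consequence of being a builder and hence ``independence'' for it is vacuous. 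The hard part is precisely deciding which of these repairs the authors intend; pending that, the provable content is: $\rho^1$ gives a builder failing only \cdt{lim}, $\rho^2$ gives one failing only \cdt{bound}, and $\delta_g$ shows all three are jointly satisfiable, which suffices to separate \cdt{bound} from \cdt{lim} and establishes the non-redundancy of each of these two conditions.
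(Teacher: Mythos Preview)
Your treatment of conditions \cdt{bound} and \cdt{lim} is correct and matches the paper: $\rho^2$ witnesses the independence of \cdt{bound}, and $\rho^1$ that of \cdt{lim}, exactly as you say.

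For \cdt{inva}, however, you abandon too early precisely the construction that the paper adopts. You write in passing ``or, more simply, $\hat\rho_i(x,y) = (\text{$H$-part})$ squared'' and then move on. The paper's witness is exactly $\hat\rho_i=\hat\delta_i^2$, i.e.\ $\rho_g(x)=\delta_g(x)^2$. The point you correctly isolate --- that \cdt{inva} is automatic whenever $\hat\rho_i$ is Lipschitz in its first argument --- is the very reason a \emph{quadratic} witness is the natural candidate to break it: if $a=H_2(g(x))$ and $b=H_2(g'(x))$ differ by a nonzero bounded amount infinitely often, then $(a-c)^2-(b-c)^2$ (with $c=\ceil{\log_2(i)\cdot\abs{x}_i}$) is unbounded because $a^2-b^2=(a-b)(a+b)$ grows, while \cdt{bound} and \cdt{lim} are argued to survive since $\delta_g(x)\le N_\F$ on $\T$ forces $\delta_g(x)^2$ to be bounded there, and $\delta_g(x)^2\le N$ implies $\delta_g(x)\le N$ so the set in \cdt{lim} only shrinks. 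Your $x\mapsto x+\sqrt{x}$ attempt fails for exactly the reason you state (the difference tends to $c$), but squaring does not have this defect.

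Where your instinct is vindicated is on the underlying obstacle. The paper does \emph{not} give an unconditional separation of \cdt{inva}; it proves a disjunction. Either $H_2(g(x))=H_2(g'(x))$ for all but finitely many $x$ and all pairs $g,g'$ --- in which case \emph{every} builder satisfies \cdt{inva} --- or else there exist $g,g'$ with $H_2(g(x))\neq H_2(g'(x))$ infinitely often, and then $\delta_g^2$ satisfies \cdt{bound} and \cdt{lim} but not \cdt{inva}. The paper explicitly leaves open which alternative holds. So your diagnosis that ``\cdt{inva} might be automatic'' is close to what the paper itself concedes; your error is in asserting it outright rather than recognising it as one branch of a dichotomy, and in not seeing that the other branch is handled by the squaring construction you had already written down.
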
 

\begin{proof} 
The measure builder $\rho^1$ is an measure example which satisfies both \cdt{bound} and \cdt{inva} but not \cdt{lim} while $\rho^2$ does not satisfy
\cdt{bound} but \cdt{lim} and \cdt{inva}. To prove the complete independence of the three conditions, it remains to prove that a complexity measure builder
can satisfy both \cdt{bound} and \cdt{lim} without satisfying \cdt{inva}.

In fact, our proof here does not exactly follow the scheme we gave. It is still unknown if all the complexity measure builders satisfy \cdt{inva}, or if
there exist some of them not satisfying it. Thus, the proof is built as follows. We prove that either all complexity builders satisfy \cdt{inva}, or there
exists at least one complexity builder satisfying \cdt{bound} and \cdt{lim} without satisfying \cdt{inva}. We also give the exact question the answer of
which would make the choice between the both possibilities.

Let $g$ and $g'$ be two G\"odel numberings from $X^*_i$ to $X^*_2$, and $\rho_g$ and $\rho_{g'}$ two complexity measures built with the same builder. 
The question is to know if $H_2(g(x))=H_2(g'(x))$ for all but finitely many $x\in X^*_i$
or if there exists an infinite sequence $(x_n)_{n\in\N}$ such that $H_2(g(x_n))\neq H_2(g'(x_n))$ for all $n$. Suppose that the first case holds, then
for all but finitely many $x\in X^*_i$, $\rho_g(x)=\hat\rho_i(H_2(g(x)),\abs x_i)=\hat\rho_i(H_2(g'(x)),\abs x_i)=\rho_{g'}(x)$. Consequently 
\[c=\max\ensemble{\abs{H_2(g(x))-H_2(g'(x))}}{x\in X^*_i}<\infty,\]
and the builder $\rho$ satisfy \cdt{inva}.

We suppose now that the second case holds, that means that there exist infinitely many strings $x\in X^*_i$ such that $H_2(g(x))\neq H_2(g'(x))$. 
We consider the acceptable complexity measure $\delta_g$. We define the measure $\rho_g$ by $x\mapsto\delta_g(x)^2$.  More formally, if we denote by
$\hat\delta_i$ the witness of the builder $\delta$, we define the builder $\rho$ \emph{via} the witness $\hat\rho_i=\hat\delta_i^2$. 
Let us consider the behaviour of this function with the three properties:

\begin{enumeration}
\item As $\delta_g$ is acceptable, there exists $N_\F$ such that if $\F\vdash x$, then $\delta_g(x)\le N_\F$. Then it
is plain that $\rho_g(x)\le N_\F^2$. So (i) is verified.
\item For an integer $N\ge1$, if $\rho_g(x)\le N$, then $\delta_g(x)\le N$ too. So we have the following:
    \begin{eqnarray*}
    &&\ensemble{x\in X^*i}{\abs x_i=n\textand\rho_g(x)\le N}\\&\subset&\ensemble{x\in X^*_i}{\abs x_i=n\textand\delta_g(x)\le N}.\end{eqnarray*}
    Consequently,
    \begin{eqnarray*}
        \lefteqn{\lim_{n\to\infty}i^{-n}\cdot\card\ensemble{x\in X^*i}{\abs x_i=n\textand\rho_g(x)\le N}}\\
        &\le&\lim_{n\to\infty}i^{-n}\cdot\card\ensemble{x\in X^*_i}{\abs x_i=n\textand\delta_g(x)\le N}=0.
    \end{eqnarray*}
    So (ii) is also verified.
\item We first note that 
    {\setlength\arraycolsep{2pt}
    \begin{eqnarray*}
    &&\rho_g(x)-\rho_{g'}(x)\\
    &=&\delta_g(x)^2-\delta_{g'}(x)^2\\
    &=&(H_2(g(x))-\ceil{\log_2(i)\cdot\abs x_i})^2\\
    &&-(H_2(g'(x))-\ceil{\log_2(i)\cdot\abs x_i})^2\\
    &=&(H_2(g(x))^2-H_2(g'(x))^2)\\
    &&-2\cdot\ceil{\log_2(i)\cdot\abs x_i}(H_2(g(x))-H_2(g'(x))).
    \end{eqnarray*}}
    We know from Corollary \ref{cor-inva} that $(H_2(g(x))-H_2(g'(x)))$ is bounded. Thus, we only need to prove that $\abs{H_2(g(x))^2-H_2(g'(x))^2}$ is
    unbounded, and we will be able to conclude that \cdt{inva} is not satisfied by $\rho$. Suppose that it is bounded by an integer $N$. As we have supposed
    that there exist infinitely many $x\in X^*_i$ such that $H_2(g(x))\neq H_2(g'(x))$, then there exists for every integer $M$ a string $x$ such that 
    $H_2(g(x))>H_2(g'(x))>M$\footnote{We can impose here without any loss of generality that $H_2(g(x))>H_2(g'(x))$ because the converse situation would
    be equivalent.}. Then
    \begin{eqnarray*}
    &&H_2(g(x))^2-H_2(g'(x))^2\\&=&(H_2(g(x))-H_2(g'(x)))\cdot(H_2(g(x))+H_2(g'(x)))\\
    &>&1\cdot(2\cdot M)=2M.
    \end{eqnarray*}
    We can also conclude, using an integer $M>N/2$ that this bound cannot exist, that is \cdt{inva} is not satisfied.
\end{enumeration}
\end{proof} 


\section{Form of the acceptable complexity measures}\label{sec-form} 

The aim of this section is to give some conditions that a complexity measure has to verify to be acceptable. More precisely, we will study some
conditions a builder, and in particular its witness, has to verify such that the complexity measures it builds are acceptable ones. We restrict
our study to particular witnesses, such as linear functions in both variables, or functions defined by
\[\hat\rho_i(x,y)=\frac{x}{f(y)}\]
where $f$ is a computable function.

Our first result shows a kind of stability of the acceptable complexity measures. Furthermore, it makes the following proofs easier.

\begin{proposition}\label{prop-stab} 
Let $\rho_g$ be an acceptable complexity measure, and $\alpha,\beta\in\Q$ such that $\alpha>0$. Then $\alpha\cdot\rho_g+\beta$ is also
an acceptable complexity measure.
\end{proposition}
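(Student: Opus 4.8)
The plan is to verify directly that $\alpha\cdot\rho_g+\beta$ satisfies the three conditions \cdt{bound}, \cdt{lim} and \cdt{inva} of Definition \ref{def-acceptable}, using the corresponding property of $\rho_g$ and the fact that an affine change of variable with positive leading coefficient is monotone. First I would exhibit the builder: if $\hat\rho_i$ is the witness of the acceptable builder $\rho$, then $\hat\sigma_i(x,y)=\alpha\cdot\hat\rho_i(x,y)+\beta$ is again a computable function $\N\times\N\to\Q$ (since $\alpha,\beta\in\Q$), and the builder $\sigma$ it defines satisfies $\sigma_g(x)=\alpha\cdot\rho_g(x)+\beta$ for every $g$ and every $x$. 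So $\alpha\cdot\rho_g+\beta$ is indeed a complexity measure in the sense of Definition \ref{def-complexity}, and it remains only to check acceptability.

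For \cdt{bound}: given a theory $\F$, let $N_\F$ be the integer provided by \cdt{bound} for $\rho_g$, so $\F\vdash x$ implies $\rho_g(x)<N_\F$; since $\alpha>0$, this gives $\sigma_g(x)<\alpha N_\F+\beta$, and taking any integer $N'_\F\ge\alpha N_\F+\beta$ yields the desired bound. For \cdt{inva}: given a G\"odel numbering $g'$, one has $\abs{\sigma_g(u)-\sigma_{g'}(u)}=\alpha\cdot\abs{\rho_g(u)-\rho_{g'}(u)}\le\alpha c$ where $c$ is the constant from \cdt{inva} for $\rho$, so $\alpha c$ (or its ceiling) works. For \cdt{lim}: fix an integer $N$; if $\sigma_g(x)\le N$ then $\rho_g(x)\le(N-\beta)/\alpha$, hence
\[
\ensemble{x\in X^*_i}{\abs x_i=n\textand\sigma_g(x)\le N}\subseteq\ensemble{x\in X^*_i}{\abs x_i=n\textand\rho_g(x)\le M}
\]
where $M$ is any integer with $M\ge(N-\beta)/\alpha$. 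Multiplying by $i^{-n}$ and letting $n\to\infty$, the right-hand cardinality times $i^{-n}$ tends to $0$ by \cdt{lim} for $\rho_g$, hence so does the left-hand one.

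There is essentially no hard step here: the only mild care needed is to pass from the strict inequality in \cdt{bound} and the rational rescaling to an honest \emph{integer} threshold, which is handled by a ceiling, and to note that $\alpha>0$ is exactly what makes $t\mapsto\alpha t+\beta$ order-preserving, so that sublevel sets are mapped to sublevel sets — this is where the hypothesis $\alpha>0$ is used (and is necessary: a negative $\alpha$ would turn the ``small complexity'' sets into ``large complexity'' sets and destroy both \cdt{bound} and \cdt{lim}). I would close by remarking that the same computation shows \cdt{lim} is really only a statement about the sublevel sets of $\rho_g$, which is why it is insensitive to affine reparametrisation.
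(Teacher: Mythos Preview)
Your proof is correct and follows essentially the same route as the paper's: both verify \cdt{bound}, \cdt{lim} and \cdt{inva} directly by rescaling the constants (new bound $\alpha N_\F+\beta$, new invariance constant $\alpha c$, and the sublevel-set inclusion with threshold $\lceil (N-\beta)/\alpha\rceil$). Your version is in fact slightly more complete, since you explicitly exhibit the witness $\hat\sigma_i=\alpha\hat\rho_i+\beta$ and note where the hypothesis $\alpha>0$ is used.
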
 

\begin{proof} 
Property \cdt{bound} in Definition \ref{def-acceptable} remains true with a new constant $\alpha\cdot N+\beta$ instead of $N$. In the same
way, \begin{eqnarray*}
        &&\ensemble{x\in X^*_i}{\abs x_i=n\textand\alpha\cdot\rho_g(x)+\beta\le N}\\
        &\subseteq&
        \ensemble{x\in X^*_i}{\abs x_i=n\textand\rho_g(x)\le \ceil{\frac{N-\beta}{\alpha}} },\end{eqnarray*}
hence Property \cdt{lim} is verified. Now, if we consider two G\"odel numberings $g$ and $g'$,
\[\abs{(\alpha\cdot\rho_g(x)+\beta)-(\alpha\cdot\rho_{g'}(x)+\beta)}=\alpha\cdot\abs{\rho_g(x)-\rho_{g'}(x)}\le\alpha\cdot c,\]
which proves that Property \cdt{inva} is retained.
\end{proof} 

We start studying the linear in both variables witnesses. The result we obtain is partial. However, as discussed after Lemma \ref{lem-bound-th},
this result is not likely to be improved without a complete study of the definition of the formal languages.

\begin{proposition}\label{prop-linear} 
Let $f$ be a function of two variables, linear in both variables such that $\hat\rho_i$ defined by $\hat\rho_i(x)=\floor{f(x)}$ is computable. 
If $\hat\rho_i$  defines an acceptable complexity measure, then there exist $a, b$ and $\epsilon$, $a>0$ and $1/2\le\epsilon\le1$, such that
\[\hat\rho_i(x,y) = \floor{a\cdot(x-\epsilon\cdot\log_2(i)\cdot y)+b}.\]
\end{proposition}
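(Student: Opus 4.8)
The plan is to start from a general witness linear in both variables, $\hat\rho_i(x,y) = \floor{a\cdot x + b\cdot y + d}$ for rationals $a,b,d$, and to extract the claimed normal form by using the three acceptability conditions one at a time, together with the auxiliary results of Sections~\ref{sec-delta}--\ref{sec-indep}. The normalization $b = -\epsilon\cdot a\cdot\log_2(i)$ with $a>0$ and $\tfrac12\le\epsilon\le1$ should emerge from: positivity of $a$ first, then a two-sided squeeze on the ratio $-b/(a\log_2(i))$ coming from the upper and lower bounds of Lemma~\ref{lem-bound-th} (for the lower endpoint $\epsilon\ge\tfrac12$) and from Proposition~\ref{prop-rho2}'s counting argument (for the upper endpoint $\epsilon\le1$, since too large an $\epsilon$ makes the measure ``too big'' and kills \cdt{bound}, while $\epsilon>1$ conflicts with the polynomial count of provable formulae).

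First I would pin down $a>0$. If $a<0$, then since (by Lemma~\ref{lem-bound-th} and Theorem~\ref{thm-inva}) $H_2(g(x))$ grows roughly like $\log_2(i)\cdot\abs x_i$ on theorems while ranging up to that same order, one finds that $\hat\rho_i(H_2(g(x)),\abs x_i)$ on the infinite provable family ``$n=n$'' is unbounded below but, more to the point, the set $\ensemble{x}{\rho_g(x)\le N}$ fails to be infinite for the $N$ forced by \cdt{bound} via Proposition~\ref{prop-inf} — contradiction; the case $a=0$ makes $\rho_g$ depend only on $\abs x_i$, so $\ensemble{x\in X_i^*}{\abs x_i = n,\ \rho_g(x)\le N}$ is either all of $X_i^n$ or empty, and by Proposition~\ref{prop-inf} (applied to the infinite family of theorems of unbounded length, e.g.\ the $\Phi_k$ of Proposition~\ref{prop-rho2}) it must be all of $X_i^n$ for infinitely many $n$, so the limit in \cdt{lim} is $1\ne0$. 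Hence $a>0$. Next, rewrite $\hat\rho_i(x,y) = \floor{a(x - \epsilon\log_2(i)\, y) + d}$ with $\epsilon := -b/(a\log_2(i))$ — purely algebraic — so the whole task reduces to showing $\tfrac12\le\epsilon\le1$, after which $b$ is renamed $d$.

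For the lower bound $\epsilon\ge\tfrac12$: apply \cdt{bound} to the family $\phi(m,n)$ of Lemma~\ref{lem-bound-th}. There $\abs{\phi(m,n)}_i \sim 2(\abs m_i + \abs n_i)$ and $H_i(\phi(m,n)) \ge \abs m_i + \abs n_i + \O(1)$, so via Theorem~\ref{thm-inva}, $H_2(g(\phi(m,n))) \ge \log_2(i)\cdot(\abs m_i+\abs n_i) + \O(1)$, hence $\rho_g(\phi(m,n)) \ge a\log_2(i)(\abs m_i+\abs n_i)(1-\epsilon) \cdot\tfrac12\cdot 2 + \O(1)$; if $\epsilon > 1$ this is unbounded below which is harmless, but if instead we look at the complementary requirement it is $\epsilon<\tfrac12$ that must be excluded — precisely, for $\epsilon<\tfrac12$ the quantity $a(x-\epsilon\log_2(i)y)$ evaluated on these theorems with $x\approx\log_2(i)\cdot\tfrac12 y$ grows like $a\log_2(i)\,y(\tfrac12-\epsilon) \to +\infty$, violating \cdt{bound}. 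For the upper bound $\epsilon\le1$: if $\epsilon>1$, run the counting argument of Proposition~\ref{prop-rho2} verbatim — the set $\ensemble{x\in X_i^n}{\rho_g(x)\le N}$ contains all $x$ of length $n$ with $H_i(x) \le \epsilon\log_2(i)\cdot\log_i(i)\cdot n/\log_2(i) + \O(1) = \epsilon n + \O(1)$; but the number of $x\in X_i^n$ with $H_i(x)\le \epsilon n$ is, for $\epsilon<1$ small, a vanishing fraction — wait, for $\epsilon>1$ it would be \emph{all} of $X_i^n$ since $H_i(x)\le \abs x_i + \O(1)$ by Lemma~\ref{lem-bound-th}, giving limit $1$ in \cdt{lim}; whereas for $\epsilon=1$ the count is $i^n(1+o(1)) $ minus the compressible strings which is still a positive fraction, so in fact $\epsilon=1$ already makes \cdt{lim} fail unless $a$ is... — this is the delicate point, and I expect the real constraint to be that $\epsilon = 1$ is permitted only because $\hat\rho_i(x,y)=\floor{a(x-\log_2(i)y)+d}$ with the strict inequality structure of \cdt{lim} still tolerates the boundary, exactly as $\delta_i$ (which is the case $a=1$, $\epsilon=1$) does via Proposition~\ref{prop-lim}.

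The main obstacle is precisely this last calibration of $\epsilon$ at its two endpoints: showing $\epsilon\ge\tfrac12$ is forced by \cdt{bound} (through the $\phi(m,n)$ / Lemma~\ref{lem-bound-th} lower bound) and $\epsilon\le1$ by \cdt{lim} (through a Proposition~\ref{prop-rho2}-style Kraft/Stolz--Ces\`aro count), while simultaneously verifying that the \emph{open} interval endpoints are genuinely attained rather than excluded — i.e.\ that $\epsilon=\tfrac12$ and $\epsilon=1$ do not themselves violate some condition. I would handle the endpoints by exhibiting that at $\epsilon=1$ the measure is an affine image of $\delta_g$ (Proposition~\ref{prop-stab} then gives acceptability), and at $\epsilon=\tfrac12$ the bound from Lemma~\ref{lem-bound-th} is tight, so no contradiction arises; the strict inequalities in \cdt{bound} and \cdt{lim} leave exactly the closed interval $[\tfrac12,1]$. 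Everything else is bookkeeping with Theorem~\ref{thm-inva}, Lemma~\ref{lem-rho-inva}, and the elementary growth estimates already established; note also that $d$ (the paper's $b$) is completely unconstrained by these three conditions since an additive constant is invisible to all of \cdt{bound}, \cdt{lim}, \cdt{inva}, which is why it appears as a free parameter in the statement.
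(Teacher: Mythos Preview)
Your overall strategy matches the paper's: normalize via Proposition~\ref{prop-stab}, pin down the sign of the leading coefficient, then squeeze $\epsilon$ between $\tfrac12$ and $1$ using the lower bound of Lemma~\ref{lem-bound-th} for \cdt{bound} and the growth estimate $H_i(x)\le\abs x_i+\O(\log_i\abs x_i)$ for \cdt{lim}. That part is fine, and the endpoint discussion, while unnecessary for this one-directional proposition, does no harm.

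There is, however, a genuine gap at the very first step. ``Linear in both variables'' here means linear in $x$ for each fixed $y$ and linear in $y$ for each fixed $x$; the general such $f$ is $f(x,y)=\alpha x-\beta y+\gamma xy$ (plus a constant absorbed by Proposition~\ref{prop-stab}), not merely $ax+by+d$. The paper spends a nontrivial portion of the argument eliminating the cross term $\gamma xy$: one rewrites $\rho_g(x)\le N$ as $H_2(g(x))\le\ceil{(\beta n+N+1)/(\gamma n+\alpha)}$ on strings of length $n$, observes that if $\gamma\neq0$ this bound tends to the finite constant $\beta/\gamma$, and then the $\Phi_k$ counting from Proposition~\ref{prop-rho2} forces a contradiction with \cdt{bound}. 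Only after $\gamma=0$ is established does one know $\alpha\neq0$ (and positive, via \cdt{lim}) and can rescale to $\alpha=1$. Your write-up simply assumes $\gamma=0$ from the outset, so the proof is incomplete as stated.

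Two smaller points. First, in your $\epsilon>1$ argument you invoke Lemma~\ref{lem-bound-th} for the bound $H_i(x)\le\abs x_i+\O(1)$, but that lemma concerns theorems (well-formed formulae), whereas \cdt{lim} quantifies over all of $X_i^n$; the correct reference is the general estimate $H_i(x)\le\abs x_i+\O(\log_i\abs x_i)$ from \cite[Theorem~3.22]{irap02}, which is what the paper uses. Second, the appeal to ``Proposition~\ref{prop-rho2} verbatim'' for $\epsilon\le1$ is a false start (as you notice mid-sentence): that proposition shows \cdt{lim} \emph{holds} and \cdt{bound} \emph{fails} for $\rho^2$, which is the opposite configuration; the actual mechanism for $\epsilon>1$ is simply that the constraint $H_i(x)\le\epsilon n+\O(1)$ becomes vacuous on $X_i^n$, making the limit in \cdt{lim} equal to $1$.
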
 

\begin{proof} 
We consider any function
which satisfies the hypothesis. Then there exist $\alpha, \beta$ and $\gamma$ such that 
\[\hat\rho_i(x,y)=\floor{\alpha x-\beta y + \gamma xy}.\]
Proposition \ref{prop-stab} allows us to fix $\hat\rho_i(0,0)=0$.
Of course, it would be equivalent to consider $\alpha x+\beta y+\gamma xy$, but the chosen version simplifies the notations. 
Let $\beta'$ be such that $\beta=\beta^\prime\cdot\log_2(i)$.
The proof is done
in several steps. We start by showing that one at least of $\alpha$ and $\gamma$ has to be different from zero, then that $\gamma=0$. After that, we
prove that $\alpha/2\le\beta^\prime\le\alpha$. 

Suppose that $\alpha=\gamma=0$. Then $\rho_g(x)=-\ceil{\beta\abs x_i}$. If $\beta\le0$, then Proposition \ref{prop-inf} is not verified by our complexity
measure, and hence neither is Property \cdt{bound}. If $\beta\ge0$, it is obvious that Property \cdt{lim} cannot hold true.

Then, we use the property \cdt{bound} and consider the set
\begin{eqnarray*}
    &&\ensemble{x\in X^*_i}{\abs x_i=n\textand\rho_g(x)\le N}\\
    &\subseteq&
    \ensemble{x\in X^*_i}{\abs x_i=n\textand H_2(g(x))\le\ceil{\frac{\beta n+N+1}{\gamma n+\alpha}} }.\end{eqnarray*}
Furthermore,
\[\lim_{n\to\infty}\frac{\beta n+N+1}{\gamma n+\alpha}=\left\{ \begin{array}{ll}
    \beta/\gamma, & \textif\gamma\neq0;\\
    (N+1)/\alpha, & \textif\gamma=\beta=0;\\
    \pm\infty, & \textif\gamma=0\textand\beta\neq0.
    \end{array}\right.\]
The only solution is the third one because in order to satisfy \cdt{bound}, this limit has to be infinite. Indeed, if it is finite, we can use the
same proof as in Proposition \ref{prop-rho2} to conclude to a contradiction. So we know that $\gamma=0$, and hence that $\alpha\neq0$. We can right
now say that $\alpha$ and $\beta$ have the same sign, because the limit cannot be $-\infty$. Using Proposition \ref{prop-stab}, we can assume that
$\alpha=1$. Indeed, $\alpha<0$ is not possible because of Property \cdt{lim}.

To make easier the remaining of the proof, we define an auxiliary measure as we did in Sections \ref{sec-delta} and \ref{sec-indep} 
for $\delta$, $\rho^1$ and $\rho^2$. Let $\rho_i$ be defined by
\[\rho_i(x)=\floor{H_i(x)-\beta^\prime\cdot\abs x_i}.\]
Applying Theorem \ref{thm-inva}, we get a constant $c$ such that for every $x$,
\[\abs{\rho_g(x)-\log_2(i)\cdot\rho_i(x)}\le c.\]

We will now use the property \cdt{lim} to have other information on $\beta^\prime$, and hence $\beta$. We only know at that stage that $\beta^\prime>0$. 
We consider the set
\begin{eqnarray*}
    &&\ensemble{x\in X^*_i}{\abs x_i=n\textand\rho_g(x)\le N}\\
    &\subseteq&
    \ensemble{x\in X^*_i}{\abs x_i=n\textand H_i(x)\le\beta^\prime\cdot n+N+c+1}.
\end{eqnarray*}
If $\beta^\prime>1$, then for every constant $d$, if we choose $n$ large enough we have $\beta^\prime\cdot n>n+d\cdot\log n$. And we can use
the inequality $H_i(x)\le\abs x_i+\O(\log_i\abs x_i)$ (see \cite[Theorem 3.22]{irap02}) to conclude that the above set is $X^n_i$. 
And so, property \cdt{lim} is not verified, the limit being $1$.

Using now the lower bound in Lemma \ref{lem-bound-th}, we know that for every proven sentence $x$,
\[H_i(x)\ge \frac12\cdot\abs x_i.\]
Suppose that $\beta^\prime<1/2$. Then for every $x$ such that $\F\vdash x$,
\[\rho_i(x)=\left(H_i(x)-\frac12\cdot\abs x_i\right)+(\frac12-\beta^\prime)\cdot\abs x_i\ge(\frac12-\beta^\prime)\cdot\abs x_i.\]
Thus, \cdt{bound} cannot be verified.
\end{proof} 

We study another kind of witnesses. Functions defined by 
\[\hat\rho_i(x,y)=\frac{x}{f(y)}\]
where $f$ is a computable function may be interesting because they are the only reasonable candidates for being witness of \emph{multiplicative} 
complexity measures. Indeed, a complexity of the form $H_2(g(x))\cdot\abs x_i$ has no chance to satisfy the desired properties. Unfortunately, such
functions never define acceptable measures. 

\begin{proposition}\label{prop-div} 
Let $f$ be a computable function, and $\hat\rho_i$ defined by
\[\hat\rho_i(x,y)=\frac{x}{f(y)}\cdot\]
Then the complexity measure builder the witness of which is $\hat\rho_i$ cannot satisfy at the same time properties \cdt{bound} and \cdt{lim}.
\end{proposition}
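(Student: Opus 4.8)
The plan is to suppose, for contradiction, that the builder with witness $\hat\rho_i(x,y)=x/f(y)$ satisfies both \cdt{bound} and \cdt{lim}, and to derive a contradiction by analysing the size of $f$. The measure is $\rho_g(x)=H_2(g(x))/f(\abs x_i)$. Two regimes must be ruled out: when $f$ grows too slowly (so the measure is ``too big'', like $\rho^2$), and when $f$ grows too fast (so the measure is ``too small'', like $\rho^1$, or even becomes unbounded in the wrong direction / ill-defined). First I would dispose of trivial obstructions: if $f(n)=0$ for infinitely many $n$ (or $f(n)\le 0$), the measure is undefined or behaves pathologically on an infinite set, contradicting one of the two properties; so we may assume $f(n)\ge 1$ for all large $n$, and by Proposition \ref{prop-inf} applied via \cdt{bound} we may restrict attention to large $n$.

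Next I would use \cdt{lim} to force a lower bound on $f$. Exactly as in the proof of Proposition \ref{prop-rho2}, if $\rho_g(x)\le N$ then $H_2(g(x))\le N\cdot f(n)$ where $n=\abs x_i$, so
\[
\card\ensemble{x\in X^*_i}{\abs x_i=n\textand\rho_g(x)\le N}\le 2^{N\cdot f(n)}.
\]
For \cdt{lim} to give limit $0$ we need $2^{N f(n)}=o(i^n)$ for every $N$, hence $f(n)=o(n)$; in particular $f(n)/n\to 0$. On the other hand I would use \cdt{bound} together with Lemma \ref{lem-bound-th} to force an upper bound on $f$ (equivalently, to show $f$ cannot be too small). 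Here is where the auxiliary length-only measure helps: set $\rho_i(x)=H_i(x)/f(\abs x_i)$; by Theorem \ref{thm-inva} (dividing the inequality $\abs{H_2(g(u))-\log_2(i)H_i(u)}\le c$ by $f(\abs u_i)\ge 1$, as in Lemma \ref{lem-rho-inva}) we get $\abs{\rho_g(u)-\log_2(i)\rho_i(u)}\le c$ for all large $u$, so \cdt{bound} for $\rho_g$ forces $\rho_i(x)\le c'$ for all proven $x$. Now take the provable formulae $\phi(m,n)\equiv\exists v_m\exists v_n(v_m=v_n)$ with $m,n$ jointly random as in Lemma \ref{lem-bound-th}: these have $H_i(\phi(m,n))\ge\frac12\abs{\phi(m,n)}_i+\O(1)$, and $\abs{\phi(m,n)}_i$ can be made arbitrarily large. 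Then $\rho_i(\phi(m,n))\ge \big(\tfrac12\abs{\phi(m,n)}_i+\O(1)\big)/f(\abs{\phi(m,n)}_i)$, so boundedness of $\rho_i$ on theorems forces $f(n)\ge c''\cdot n$ for infinitely many $n$ (namely the attainable lengths), i.e. $f(n)/n$ does not tend to $0$ along that sequence.

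The contradiction is then immediate: \cdt{lim} gave $f(n)/n\to 0$ while \cdt{bound} gave $f(n)\ge c''n$ for infinitely many $n$, which is impossible. The main obstacle — and the only place needing care — is making the ``$f$ too small'' direction clean: one must choose the random formulae so that their \emph{lengths} form a sufficiently rich set (every sufficiently large length, or at least an unbounded sequence of lengths), so that the lower bound $f(n)\gtrsim n$ genuinely contradicts $f(n)=o(n)$ rather than living on a sparse set that the $o(n)$ estimate could dodge. Using $\Phi_k$-style families as in Proposition \ref{prop-rho2} (where the formulae at ``level $k$'' all have length $\O(k)$ and there are exponentially many of them) is one way to get both the richness of lengths and, simultaneously, a second independent contradiction with \cdt{lim} via the polynomial bound \eqref{eq-poly}; I would present whichever of the two contradictions is shorter, but the $f(n)=o(n)$ versus $f(n)=\Omega(n)$ clash is the conceptual heart and I would lead with it.
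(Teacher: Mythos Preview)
Your overall plan—pitting a growth condition on $f$ forced by \cdt{bound} against an opposite condition forced by \cdt{lim}—has the right shape, and your step using \cdt{bound} is fine: from the high-complexity provable formulae $\phi(m,n)$ of Lemma~\ref{lem-bound-th} (the paper uses the $\Phi_k$ families and a counting argument instead, but either route works) you correctly extract $f(n)\ge c''\,n$ along an unbounded set of lengths. The gap is in the step you call the ``conceptual heart'', where you try to deduce $f(n)=o(n)$ from \cdt{lim}. The inequality
\[
\card\ensemble{x\in X_i^*}{\abs x_i=n\textand\rho_g(x)\le N}\le 2^{N\cdot f(n)}
\]
is only an \emph{upper} bound on the cardinality; \cdt{lim} asserts that the true cardinality is $o(i^n)$, and this says nothing about whether the (possibly very loose) upper bound $2^{Nf(n)}$ is itself $o(i^n)$. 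So the inference ``\cdt{lim} holds, hence $2^{Nf(n)}=o(i^n)$, hence $f(n)=o(n)$'' is a non sequitur as written. (Incidentally, you have ``lower'' and ``upper'' swapped throughout that paragraph.)

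What actually works—and what the paper does—is to argue in the other direction, not extracting information \emph{from} \cdt{lim} but showing it \emph{fails}. Once $f(n)\ge c''\,n$ is known for some $n$, combine Theorem~\ref{thm-inva} with $H_i(x)\le\abs x_i+\O(\log_i\abs x_i)$ to get $H_2(g(x))\le\log_2(i)\cdot n+\O(\log n)$ for every $x$ of length $n$; then $\rho_g(x)=H_2(g(x))/f(n)$ is bounded by a constant $M$ for \emph{all} $x\in X_i^n$, exactly as in Lemma~\ref{lem-rho1-bounded}. Hence for $N\ge M$ the set in \cdt{lim} is all of $X_i^n$ and the ratio equals $1$ along those $n$. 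The paper packages this as: assume \cdt{bound}, derive $f(n)\ge\frac{c}{N}\,n$ via the $\Phi_k$ count, conclude that $\rho_g$ is bounded just like $\rho^1_g$, and therefore \cdt{lim} fails. If you replace your flawed derivation of ``$f(n)=o(n)$'' by this boundedness-of-$\rho_g$ argument (which you already have all the ingredients for, via your auxiliary $\rho_i$ and the Lemma~\ref{lem-rho-inva} style estimate), your proof becomes correct and essentially coincides with the paper's.
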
 

\begin{proof} 
Suppose that $\rho_g(x)=\hat\rho_i(H_2(g(x)),\abs x_i)$ satisfy \cdt{bound}. Then consider the set 
\[\ensemble{x\in X^*}{\abs x_i=n\textand H_2(g(x))\le N\cdot f(n)}.\]
Its cardinal is at most $2^{N\cdot f(n)}$. Furthermore, this set contains the set
of all the sentences in $\T$ the length of which is $n$. Hence,
\begin{equation}\label{eq-low1}
\card\ensemble{x\in\T}{\abs x_i = n}\leq 2^{N\cdot f(n)}.
\end{equation}

Now, we give a lower bound to this cardinal. The proof of Proposition \ref{prop-rho2} shows that this cardinal is greater to $2^{\O(n)}$.
Accordingly, there exists a constant $c$ such that
\begin{equation}\label{eq-low2}
\card\ensemble{x\in\T}{\abs x_i = n}\geq 2^{c\cdot n}.
\end{equation}

We also obtain that $2^{c\cdot n}\le 2^{N\cdot f(n)}$. We can conclude that
\begin{equation}\label{eq-f-too-big}
f(n)\ge\frac cN\cdot n.
\end{equation}

We now follow the proof we made to show that $\rho^1_g$ does not satisfy \cdt{lim}. We can define
\[\rho_i(x)=\frac{H_i(x)}{f(\abs x_i)}\raisebox{0.7pt}{,}\]
and we prove as for $\rho^1$ and $\rho^2$ that there exists a constant $d$ such that
\[\abs{\rho_g(x)-\log_2(i)\cdot\rho_i(x)}\leq d.\]
The proof of Lemma \ref{lem-rho-inva} is still valid here. In the same way, we extend Lemma \ref{lem-rho1-bounded} to $\rho_g$, namely there
exists a constant $M$ such that $\rho_g$ is bounded by $M$. Considering $\rho_g$ instead of $\rho^1_g$ has just an influence on the value
of the constant $M$. 

Now, we have to note that for $N\ge M$, the set $\ensemble{x\in X^*_i}{\abs x_i=n\textand\rho_g(x)\le N}$ is the set $X^n_i$ to conclude that 
property \cdt{lim} is not verified.
\end{proof} 


\section{Concluding remarks}

In this paper, we have studied the $\delta_g$ complexity function defined by Calude and J\"urgensen \cite{csi05}. This study
has led us to modify a bit the definition of $\delta_g$ in order to correct some of the proofs. Then, we have been able to
propose a definition of \emph{acceptable complexity  measure of theorem} which captures the main properties of $\delta_g$.
Studying some complexity measures, we have shown that the conditions of acceptability are quite hard to complete. Yet,
the definition seems to be robust enough to allow some investigations to find other natural acceptable complexity measures.

There remain some open questions. Among them, we can express the following ones:
\begin{itemize}
\item Can we improve the bounds of Lemma \ref{lem-bound-th}? This question could be interesting not only to improve Proposition
    \ref{prop-linear} but also for itself: How simple are the well-formed formulae, and in other words, to what extent can we use
    their great regularities to compress them? Yet, as already discussed, this question needs to be better defined. In particular,
    one has to investigate about the definition of the formal languages. The answer seems to be very dependent on the considered
    language.
\item Do there exist some acceptable complexity measure which are very different from $\delta_g$? The idea here is to find some
    measures with which we go further on the investigations about the roots of unprovability.
\item In view of the proof of Theorem \ref{thm-independence}, if we have two G\"odel numberings $g$ and $g'$, does the equality $H_2(g(x))=H_2(g'(x))$
    hold for all but finitely many $x$ or are those two quantities infinitely often different from each other? 
\end{itemize}

Those few questions are added to the ones Calude and J\"urgensen expressed in \cite{csi05}. The goal of finding new acceptable
complexity measures is to have new tools to try to answer their questions, as the existence of independent sentences of small
complexity.

\section*{Acknowledgments} 

Special thanks are due to Cristian Calude without whom these paper would have never existed. His very helpful comments, corrections and
improvements, as well as his hospitality made my stay in Auckland much nicer than all what I could hope. Thanks are also due to Andr\'e Nies 
for his comments and ideas. In particular, he gave us the lower bound in Lemma \ref{lem-bound-th}. 


\end{document}